\newcommand{\lcdj}[1]{
	\lambda'_{2,1}(#1)
}
\newcommand{\lcdjf}[0]{
	\lambda'_{2,1}
}
\newcommand{\fcdjf}[0]{
	f'_{2,1}
}
\newcommand{\fcdj}[1]{
	f'_{2,1}(#1)
}
\newcommand{\N}[0]{
	\mathbb{N}
}
\DeclareMathOperator*{\LL}{L}
\newcommand{\Lprb}[0]{
	\unskip	$\LL_{2,1}$\xspace
}
\newcommand{\Lpr}[0]{
	\unskip	$\LL'_{2,1}$\xspace
}
\newcommand{\Lproblem}[0]{
	\unskip	{\sc Distance Edge Labeling}\xspace
}
\newcommand{\Lvproblem}[0]{
	\unskip	{\sc Distance Vertex Labeling}\xspace
}
\newcommand{\MNproblem}[0]{
	\unskip	{\sc Monotone Not All Equal 3-SAT}\xspace
}
\newcommand{\MNpr}[0]{
	\unskip	{\sc MNAE-3-SAT}\xspace
}
\newcommand{\pprob}[5]{
\begin{definition}[#1]\label{#5}
\newline
\begin{center}
\begin{tabular} {|ll|}
	\hline
	{\bf Input:\enspace}&{\parbox[t]{27em}{#2}}\\
    {\bf Parameter:\enspace}&{\parbox[t]{27em}{#3}}\\
	{\bf Question:\enspace}&\parbox[t]{27em}{#4}\\
	\hline
\end{tabular}
\end{center}
\end{definition}
}
\newcommand{\prob}[4]{
\begin{definition}[#1]\label{#4}
\newline
\begin{center}
\begin{tabular} {|ll|}
	\hline
	{\bf Input:\enspace}&{\parbox[t]{27em}{#2}}\\
	{\bf Question:\enspace}&\parbox[t]{27em}{#3}\\
	\hline
\end{tabular}
\end{center}
\end{definition}
}
\newcommand{\img}[2]{
	\begin{center}
		\includegraphics[scale=#2]{img/#1.eps}
	\end{center}
}
\newcommand{\imgw}[2]{
	\begin{center}
		\includegraphics[width=#2\textwidth]{img/#1.eps}
	\end{center}
}
\newcommand{\NP}{{$\mathsf{NP}$}}
\newcommand{\pNP}{{$\mathsf{paraNP}$}}
\newcommand{\PTIME}{{$\mathsf{P}$}}
\newcommand{\El}[0]{
\mathbb{E}
}
\newcommand{\Ol}[0]{
\mathbb{O}
}
\newtheorem{observation}[theorem]{Observation}
\begin{document}
\frontmatter          
\pagestyle{headings}  
\addtocmark{Distance labeling} 
%
%
%
%
%
\title{Computational complexity of distance edge labeling\thanks{Research was supported by the project Kontakt LH12095, by the project SVV-2015-260223 and by the project CE-ITI P202/12/G061 of GAČR.
}}
\titlerunning{Distatnce edge labeling}  
\author{Dušan Knop\thanks{Author was supported by the project GAUK 1784214.}\inst{1} \and Tomáš Masařík\inst{1}} 
\authorrunning{Dušan Knop and Tomáš Masařík} 
%
\tocauthor{Dušan Knop, Tomáš Masařík}
\institute{Department of Applied Mathematics, Faculty of Mathematics and Physics, Charles University, Prague, Czech Republic,\\
\email{\{knop,masarik\}@kam.mff.cuni.cz},
}
\maketitle              

\begin{abstract}
	The problem of \Lproblem is a variant of \Lvproblem (also known as \Lprb labeling) that has been studied for more than twenty years and has many applications, such as frequency assignment.

The \Lproblem problem asks whether the edges of a given graph can be labeled such that the labels of adjacent edges differ by at least two and the labels of edges at distance two differ by at least one. Labels are chosen from the set $\{0,1,\dots,\lambda\}$ for $\lambda$ fixed.

We present a full classification of its computational complexity---a dichotomy between the polynomially solvable cases and the remaining cases which are \NP-complete.
We characterise graphs with $\lambda\leq 4$ which leads to a polynomial-time algorithm recognizing the class and we show \NP-completeness for $\lambda\geq5$ by several reductions from \MNproblem. 


\keywords{Computational complexity, distance labeling, line graphs}

\end{abstract}
%
\section{Introduction}
We study the computational complexity of the distance edge-labeling problem.
This problem belongs to a wider class of problems that generalize the graph coloring problem.
The task is to assign a set of colors to each vertex, such that whenever two vertices are adjacent, their colors differ from each other. For a survey about this famous graph problem and related algorithms, see \cite{colSurvey}.


We are interested in the so-called \emph{distance labeling}. In this generalization of the former problem the condition enforcing different colors is extended and takes into account also the second neighborhood of a vertex (or an edge). The second neighborhood is the set of vertices (or edges) at distance at most 2. For a survey about distance labelings, we refer to the article by Tiziana Calamoneri~\cite{survey}, as well as her online survey~\cite{onlineSurvey}.

Graph distance labeling has been first studied by Griggs and Yeh~\cite{GriggsLabeling,YehLabeling} in 1992. The problem has many applications, the most important one being frequency assignment~\cite{channel}.
The complexity of \Lprb labeling for a fixed parameter $\lambda$ has been established in~\cite{fiala}. They show a dichotomy between polynomial cases for $\lambda\leq3$ and \NP-complete cases for $\lambda\geq4$.

Moreover, for the usual graph coloring problem there is a theorem of Vizing~\cite{vizing}, which states that for the edge-coloring number $\chi'(G)$ it holds that $\Delta\le\chi'\le\Delta + 1,$ where $\Delta$ is the maximum degree of the graph. For \Lprb labeling there is a general bound due to Havet et al.~\cite{havet-upper}, namely $\lambda\le \Delta^2,$ for $\Delta \ge 79.$

Before we proceed to the formal definition of the corresponding decision problem, we give several definitions of a labeling function of a graph and of the minimal distance edge-labeling number. 
Note that the distance edge-labeling is equivalent to the distance vertex labeling of the associated line-graphs.
A \emph{line-graph} $L(G)$ is a graph derived from another graph $G$ such that vertices of $L(G)$ are edges of $G$ and two vertices $a,\ b$ of $L(G)$ are connected by an edge whenever $a,\ b$ (as edges of $G$) are adjacent.
We define the \emph{distance} between edges of a graph as their distance in the corresponding line-graph.

\begin{definition}[Edge-labeling function]
	Let $G(V,E)$ be a graph. A function $\fcdjf :E\rightarrow \N$ is an {\em edge-labeling}, if it satisfies:
\begin{itemize}
\item $|\fcdj{e}-\fcdj{e'}|\geq2$ for neighboring edges (i.e. those in the distance one),
\item $|\fcdj{e}-\fcdj{e'}|\geq 1$ for edges at distance two.
\end{itemize}
\end{definition}

As usual, we are interested in a labeling that minimizes the number of labels used by a feasible labeling.

\begin{definition}[Minimum distance edge-labeling]
Let $G$ be a graph and $\fcdjf$ an edge-labeling function, we define the graph parameter $\lcdjf$ as:
$$
\lcdj{G}:=\min_{\fcdjf}{\max_{e\in E}{\fcdj{e}}}.
$$
\end{definition}

The size of the range of a (not necessarily optimal) edge-labeling function $\fcdjf$ is called the \emph{span}.

\pprob{\Lproblem problem (also known as \Lpr) }{a graph $G$}{$\lambda\in\N$}{Is $\lcdj{G}\leq\lambda?$}{def:lcdjl}

\subsection{Our results}
Our main result is the following theorem about the dichotomy of the \Lproblem problem.

\begin{theorem}[Dichotomy of distance edge-labeling]\label{thm:main}
The problem \Lpr is polynomial-time solvable if and only if $\lambda\leq4$. Otherwise it is \NP-complete.
\end{theorem}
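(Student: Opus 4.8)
The plan is to prove the dichotomy in two independent halves, corresponding to the two directions of the ``if and only if''.

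For the polynomial-time direction ($\lambda\leq4$) I would aim for a structural characterization of the graphs $G$ satisfying $\lcdj{G}\leq\lambda$, exactly as advertised in the abstract. Concretely, I would argue that for small $\lambda$ the labeling constraints are so tight that admissibility is equivalent to a short, finite list of forbidden local configurations, phrased in terms of the maximum degree $\Delta$ and a few adjacency patterns among edges. The key observation is that since adjacent edges must differ by at least $2$ and edges at distance two by at least $1$, the number of edges meeting at a common vertex is severely restricted: the labels of the (at most $\Delta$) edges incident to a single vertex must be pairwise distinct and moreover pairwise far apart, so $\Delta$ must be bounded by a function of $\lambda$. First I would derive these numerical bounds on $\Delta$ for each $\lambda\in\{0,1,2,3,4\}$, then enumerate the finitely many ``hard cores'' (small subgraphs) that can obstruct a labeling, and finally show that absence of all of them is both necessary and sufficient. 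Since each forbidden configuration has bounded size, checking for it reduces to a bounded-radius local search, giving a linear- or polynomial-time recognition algorithm; the membership in \PTIME then follows. I would also note that $\lambda\leq 1$ is trivial (only matchings of very restricted shape survive) and treat $\lambda\in\{2,3,4\}$ as the substantive cases.

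For the hardness direction ($\lambda\geq5$) I would build a polynomial-time many-one reduction from \MNproblem, whose \NP-completeness I take as given. The strategy is the standard gadget construction: design a \emph{variable gadget} whose feasible labelings encode a Boolean truth value (two distinguished edges forced into two complementary label-classes, say a ``low'' set versus a ``high'' set among $\{0,\dots,\lambda\}$), and a \emph{clause gadget} that admits a valid labeling precisely when the three incident literals are not all equal, mirroring the not-all-equal semantics. The parameter being fixed at $\lambda\geq5$ means I cannot use labels freely; instead the gadgets must be engineered so that the rigid distance-$1$ (gap $\geq2$) and distance-$2$ (gap $\geq1$) constraints propagate a binary choice along connector paths without leaking. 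Because Theorem~\ref{thm:main} asserts hardness for \emph{every} fixed $\lambda\geq5$, I would first give a clean construction for the base case $\lambda=5$ and then explain how to pad or augment the gadgets---e.g.\ by attaching forcing structures that consume the extra labels---to lift the reduction to arbitrary $\lambda\geq5$; the abstract's phrase ``several reductions'' suggests the padding is not fully uniform and may require separate treatment for a few residue classes of $\lambda$.

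The main obstacle, and where most of the real work lies, is the hardness side: verifying that each gadget is \emph{sound} (every satisfying assignment yields a valid labeling) and \emph{complete} (every valid labeling of the whole construction decodes to a satisfying assignment), with no unintended labelings that would let a clause gadget be satisfied despite an all-equal literal pattern. Controlling how label values propagate through the connectors under the asymmetric $\geq2$/$\geq1$ constraints is delicate, and ensuring the gadgets compose without interference at their shared vertices is the crux. On the polynomial side the analogous difficulty is confirming that the finite forbidden list is genuinely complete---that no larger, non-local obstruction exists for $\lambda\leq4$---which I would secure by the degree bounds above together with a careful greedy or exchange argument showing any configuration avoiding the forbidden cores can always be labeled.
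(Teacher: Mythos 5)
The main gap is on the polynomial side. You claim that for $\lambda\leq4$ admissibility is equivalent to the absence of a finite list of bounded-size forbidden configurations, checkable by bounded-radius local search. That is false for $\lambda=4$. The correct characterization (the paper's Lemma~\ref{lem:gc}) says that a graph with $\lcdj{G}\le 4$ must be a generalized path or generalized cycle with constraints on the distances between consecutive \emph{hairy} vertices (degree-$3$ vertices with a pendant neighbor); for generalized cycles these constraints include a genuinely global condition: if some consecutive pair of hairy vertices is at distance $10$, then either the number of such distance-$10$ pairs around the cycle must be \emph{even}, or some consecutive pair at distance $13$, $14$, or $\ge 16$ must exist. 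Parity of the number of distance-$10$ pairs is not a local property: one can build two generalized cycles whose bounded-radius neighborhoods are all isomorphic (every consecutive hairy pair at distance $10$) yet exactly one of them admits a span-$4$ labeling, depending on whether the count of such pairs is even or odd. The source of this phenomenon is that the unique labeling sequence of a distance-$10$ segment, $0314204130$, starts and ends with the same label, so consistency must be tracked around the whole cycle, not locally. Consequently your recognition algorithm, as designed, would misclassify such cycles; recognition is still polynomial, but it requires the structural decomposition plus arithmetic (including a global parity count) on hairy-vertex distances, which your ``greedy or exchange argument'' would have to discover and cannot be reduced to excluding finite cores.

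On the hardness side your plan coincides with the paper's: a reduction from \MNproblem with variable gadgets encoding a binary choice and clause gadgets realizing not-all-equal semantics. However, your lifting strategy differs from what the paper actually does, and in a direction that is likely to fail: the paper does \emph{not} pad the $\lambda=5$ construction upward. Instead it gives uniform parametric constructions for even $\lambda\ge 8$ and odd $\lambda\ge 9$, built around vertices of maximum possible degree (whose incident edges are forced to use only the even labels $\El$ or only the odd labels $\Ol$, Lemmas~\ref{lem:max}, \ref{lem:even}, \ref{lem:odd}, and a Hall's-theorem matching argument in Lemma~\ref{lem:joint}), and then handles $\lambda=5,6,7$ by separate ad hoc gadgets precisely because the general gadgets do not scale down. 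Padding by ``forcing structures that consume extra labels'' is delicate in distance labeling, since added structure changes distance-two constraints globally; the paper's organization is evidence that no simple uniform padding was found. Finally, note that your write-up contains no concrete gadgets or correctness verification, so the hardness half remains a proof plan rather than a proof; all of the substantive work (the max-degree forcing lemmas, the repeatable variable parts, and the middle-piece gadgets that restrict outputs to the intended label pairs) is still to be done.
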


We derive Theorem~\ref{thm:main} as a combination of Theorem~\ref{thm:poly} that describes all graphs with $\lcdjf\le 4$ and Theorem~\ref{thm:np} presenting the \NP-completeness result. Note that our Theorem~\ref{thm:np} also extends to the following inapproximability result:

\begin{corollary}
The \Lproblem problem cannot be approximated within a factor of $6/5-\varepsilon$, unless \PTIME = \NP.
\end{corollary}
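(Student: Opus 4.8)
The plan is to exploit the integrality gap that is implicit in the \NP-completeness reduction behind Theorem~\ref{thm:np}. Since labels are integers, the value $\lcdj{G}$ is always a non-negative integer, so the hard decision question ``Is $\lcdj{G}\le 5$?'' is really a \emph{promise-gap} statement. First I would revisit the reduction from \MNproblem underlying Theorem~\ref{thm:np} at the smallest hard threshold $\lambda=5$, and record the two-sided guarantee on the constructed graph $G$: every \textsc{yes}-instance of \MNproblem is mapped to a graph with $\lcdj{G}\le 5$, while every \textsc{no}-instance is mapped to a graph with $\lcdj{G}>5$, hence $\lcdj{G}\ge 6$ by integrality. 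This is exactly the gap $[5,6]$, whose ratio is $6/5$.

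Next I would argue by contraposition. Suppose that for some fixed $\varepsilon>0$ there is a polynomial-time algorithm $\mathcal{A}$ approximating the \Lproblem problem within a factor of $6/5-\varepsilon$. Given an instance $\varphi$ of \MNproblem, I would build $G$ via the reduction and run $\mathcal{A}$ on $G$, obtaining a feasible labeling of some span $A$; feasibility gives $A\ge\lcdj{G}$, and the approximation guarantee gives $A\le(6/5-\varepsilon)\,\lcdj{G}$. On a \textsc{yes}-instance we then get $A\le(6/5-\varepsilon)\cdot 5=6-5\varepsilon<6$, so the integer $A$ satisfies $A\le 5$; on a \textsc{no}-instance we get $A\ge\lcdj{G}\ge 6$. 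Thus the single test ``$A\le 5$?'' decides \MNproblem in polynomial time, forcing \PTIME = \NP.

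The only genuine obstacle is the first step: verifying that the reduction of Theorem~\ref{thm:np} certifies $\lcdj{G}\ge 6$ on \textsc{no}-instances, rather than merely the weaker ``$\lcdj{G}\not\le 5$'' that the decision formulation literally records. For a pure decision reduction these are equivalent by integrality of the span, so nothing beyond recalling that $\lcdj{G}\in\N$ is needed. Finally, the sharpness of the ratio $6/5$ (as opposed to a weaker $7/6$, and so on) follows precisely from instantiating the reduction at the smallest hard value $\lambda=5$ rather than any larger $\lambda$, since there the multiplicative gap $6/5$ is largest.
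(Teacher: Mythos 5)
Your proposal is correct and is essentially the argument the paper intends: the corollary is stated as a direct consequence of Theorem~\ref{thm:np}, and the implicit proof is exactly your gap argument at the threshold $\lambda=5$ (yes-instances give $\lcdj{G}\le 5$, no-instances give $\lcdj{G}\ge 6$ by integrality, and a $(6/5-\varepsilon)$-approximation would separate the two, forcing \PTIME{} $=$ \NP). Your handling of the integrality issue and the choice of the smallest hard value $\lambda=5$ to maximize the ratio match the paper's reasoning.
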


Moreover, according to~\cite{flum-grohe}, the proof implies that the \Lproblem  is \pNP-hard while parameterized by its span. 

\subsection{Preliminaries}
We state several basic and well-known observations with the connection to Definition~\ref{def:lcdjl}, as well as some notation used in this paper.


For further standard notation in graph theory, we refer to the monograph~\cite{diestel2006graph}.

The first observation gives a trivial lower-bound on $\lcdj{G}$.

\begin{observation}[Max-degree lower-bound]\label{thm:deltaLowerBound}
	Let $G$ be a graph and let $\Delta$ be its maximum degree. Then $\lcdj{G}\geq 2\cdot(\Delta - 1).$
\end{observation}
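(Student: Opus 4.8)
The plan is to exhibit a single vertex whose incident edges already force a long chain of labels, so that only the distance-one constraint is needed. First I would fix a vertex $v$ of maximum degree $\Delta$ and observe that its $\Delta$ incident edges are pairwise adjacent, since every two of them share the endpoint $v$. By the definition of an edge-labeling, the labels of any two adjacent edges differ by at least $2$; hence, for any fixed feasible labeling $\fcdjf$, the $\Delta$ values $\fcdj{e}$ assigned to these incident edges are pairwise at distance at least $2$ on the integer line.

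Next I would argue that any collection of $\Delta$ nonnegative integers that are pairwise at least $2$ apart must contain a value of size at least $2(\Delta-1)$. Sorting the labels of the edges incident to $v$ as $a_1 < a_2 < \dots < a_\Delta$ and applying $a_{i+1} \ge a_i + 2$ repeatedly gives $a_\Delta \ge a_1 + 2(\Delta-1) \ge 2(\Delta-1)$, using $a_1 \ge 0$. Thus every feasible labeling assigns some edge a label of value at least $2(\Delta-1)$, so its maximum label is at least $2(\Delta-1)$. Taking the minimum over all feasible labelings $\fcdjf$ then yields $\lcdj{G} \ge 2(\Delta-1)$, as claimed.

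I do not expect any genuine obstacle here: the only observation that matters is that edges sharing an endpoint form a clique in the line-graph, so the distance-one requirement alone suffices and the weaker distance-two requirement is never invoked. In effect the argument says that a clique of size $\Delta$ subject to the ``labels differ by at least $2$'' rule behaves like a proper coloring with doubled gaps, which immediately pins down the minimal span. The slight care needed is simply to phrase the bound for an arbitrary feasible labeling before passing to the minimum defining $\lcdjf$.
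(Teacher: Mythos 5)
Your proof is correct and is exactly the argument the paper intends: the paper states this as a basic observation without proof, and the standard justification is precisely yours---the $\Delta$ edges at a maximum-degree vertex are pairwise adjacent, so their labels must be pairwise at least $2$ apart, forcing a label of at least $2(\Delta-1)$ in every feasible labeling. No gaps; the care you take in bounding an arbitrary labeling before passing to the minimum defining $\lcdjf$ is the right way to phrase it.
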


Note that this observation gives also an upper bound on the max-degree of a graph $G$ with $\lcdj{G}\le\lambda$ for a given $\lambda\in\N.$

\begin{observation}[The symmetry of distance labeling]
Let $G$ be a graph, $f:E\rightarrow\N$ be a (not necessarily optimal) labeling function and $\lambda$ be the span of $f$.
Then also the function $f'(e) = \lambda - f(e)$ is a valid labeling function of span $\lambda.$
\end{observation}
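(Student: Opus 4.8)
The plan is to exploit the fact that both defining conditions of an edge-labeling are phrased purely in terms of the absolute difference $|f(e)-f(e')|$, together with the elementary observation that the reflection $x\mapsto\lambda-x$ is an isometry of the integers and therefore leaves every such absolute difference unchanged.

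First I would check that $f'$ is a well-defined labeling function, i.e. that it maps into $\N$. Since $\lambda$ is the span of $f$, every label lies in $\{0,1,\dots,\lambda\}$, so for every edge $e$ we have $0\le\lambda-f(e)\le\lambda$ and hence $f'(e)=\lambda-f(e)\in\N$; in fact $f'(e)\in\{0,1,\dots,\lambda\}$.

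The central step is the identity
$$
|f'(e)-f'(e')| = |(\lambda-f(e))-(\lambda-f(e'))| = |f(e')-f(e)| = |f(e)-f(e')|,
$$
valid for every pair of edges $e,e'$. Because the two conditions in the definition of an edge-labeling bound precisely these quantities (by $2$ for neighboring edges and by $1$ for edges at distance two), and the quantities are unchanged, $f'$ satisfies exactly the same inequalities that $f$ does and is thus again a valid edge-labeling. For the span, note that the reflection $x\mapsto\lambda-x$ maps the set $\{0,1,\dots,\lambda\}$ bijectively onto itself, so $f'$ uses no label exceeding $\lambda$; under the usual normalization that $f$ attains the label $0$ this bound is achieved, giving span exactly $\lambda$.

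I do not expect any genuine obstacle here: the statement is essentially a direct verification, and its only real content is the isometry identity above. The single point that deserves care is the bookkeeping of the label range, namely ensuring that $f'(e)$ is non-negative and still fits in $\{0,1,\dots,\lambda\}$, which is exactly why I isolate the well-definedness check as a separate step before invoking the distance constraints.
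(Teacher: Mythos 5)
Your proof is correct: the paper states this observation without proof as a basic fact, and your direct verification---the reflection $x\mapsto\lambda-x$ preserves all absolute differences $|f(e)-f(e')|$ and maps $\{0,1,\dots,\lambda\}$ onto itself---is exactly the intended argument. The care you take with the range bookkeeping and the span normalization is appropriate and introduces no gap.
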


We call such a derived labeling of the edges of a graph a {\em $\lambda$-inversion}.

\section{Polynomial cases}
In this section we give a full description of graphs admitting a labeling with small number of labels, in particular graphs $G$ with $\lcdj{G}\leq 4$. Moreover, these graphs can be recognized in polynomial time. This leads to Theorem~\ref{thm:poly}, which is the main result of this section.

For the ease of presentation we split the proof and statement of the Theorem~\ref{thm:poly} into several lemmas, each for a particular value of $\lcdj{G}$.

\begin{theorem}[Polynomial cases of distance edge-labeling.]\label{thm:poly}
For any graph $G$ and for $\lambda = 0,1,2,3,4$ the \Lproblem problem $\lcdj{G} = \lambda$ (or $\lcdj{G} \le\lambda$)
can be solved in polynomial time. Moreover, it is possible to compute such a labeling in polynomial time.
\end{theorem}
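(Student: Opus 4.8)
The plan is to prove the theorem by treating each value of $\lambda \in \{0,1,2,3,4\}$ as a separate case, and in each case giving a structural characterisation of exactly which graphs $G$ satisfy $\lcdj{G} = \lambda$ (and hence, by taking unions over smaller values, $\lcdj{G} \le \lambda$). The crucial leverage comes from Observation~\ref{thm:deltaLowerBound}: since $\lcdj{G} \ge 2(\Delta-1)$, any graph with $\lcdj{G} \le 4$ must have maximum degree $\Delta \le 3$. This single bound collapses the problem from arbitrary graphs to graphs of degree at most three, which is what makes a finite structural description possible at all.

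\medskip
\noindent\emph{The case analysis.} I would start with the degenerate small cases and work upward. For $\lambda = 0$ the only labelable graphs are edgeless (an edge forces a label, and two adjacent edges already need labels differing by $2$). For $\lambda = 1$, one checks that a single isolated edge per component is essentially all that is allowed, since adjacent edges need difference $\ge 2$ but only labels $0,1$ are available, so $G$ must be a matching. For $\lambda = 2,3$ the constraint $\Delta \le 2$ forces $G$ to be a disjoint union of paths and cycles, and then I would determine exactly which path/cycle lengths admit a valid labeling within the allowed span by a direct pattern argument (e.g. repeating label blocks along a path, with the cycle case requiring the pattern to close up consistently modulo the cycle length). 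The symmetry observation ($\lambda$-inversion) can be used to halve the casework on labeling patterns. For $\lambda = 4$, $\Delta$ may be $3$, which is the genuinely interesting case: around a degree-$3$ vertex the three incident edges are mutually adjacent, so their labels must be pairwise $\ge 2$ apart within $\{0,1,2,3,4\}$, forcing them (up to inversion) to be $\{0,2,4\}$; this rigidity propagates and should pin down the local structure, after which I would enumerate the admissible building blocks and show $G$ must be a disjoint union of a few explicitly described families.

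\medskip
\noindent\emph{From characterisation to algorithm.} Once each case yields a description of the graph class by a finite list of forbidden configurations or a finite set of permitted local patterns (all expressible in terms of degrees, component shapes, and cycle lengths modulo a small constant), recognition becomes polynomial: compute $\Delta$, reject if it is too large, then verify the local/global conditions by scanning components. Producing the witnessing labeling in polynomial time is immediate from the same analysis, since the characterisation is constructive—the patterns that prove membership directly give the labels.

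\medskip
\noindent The main obstacle I expect is the $\lambda = 4$ case with degree-$3$ vertices. The difficulty is that while the label set $\{0,2,4\}$ is forced at each branch vertex, I must show that these local forcings are globally consistent \emph{only} on a restricted family of graphs, and rule out all other degree-$\le 3$ graphs by exhibiting where a valid labeling is obstructed. Handling how the forced labels interact along paths connecting two branch vertices—especially the distance-two constraints that reach across a vertex—requires care, and getting the exhaustive list of admissible components right (rather than merely necessary conditions) is where the real work lies.
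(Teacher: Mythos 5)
Your overall strategy---use Observation~\ref{thm:deltaLowerBound} to force $\Delta\le 2$ for $\lambda\le 3$ and $\Delta\le 3$ for $\lambda=4$, then characterise the admissible graphs case by case and read off a polynomial recognition algorithm together with an explicit labeling---is exactly the paper's strategy. However, the proposal stops short of the content that actually constitutes the proof, and what it predicts about that content is partly wrong.

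Two concrete issues. First, a small factual error: for $\lambda=0$ the labelable graphs are not the edgeless ones; a single edge can be labeled $0$, and more generally any matching works, since the constraints only bind edges at distance one or two (the paper, working with connected graphs, lists $P_1$ and $P_2$). Second, and more seriously, the $\lambda=4$ case---which you correctly identify as the real work---is left as a plan, and the plan's predicted outcome does not match what is actually true. The forced pattern $\{0,2,4\}$ at a degree-$3$ vertex has a stronger consequence that you do not derive: the edge labeled $2$ cannot continue past a vertex of degree $\ge 2$, because any continuation would have to be $0$ or $4$ (adjacency constraint) yet is at distance two from the edges labeled $0$ and $4$, a contradiction; hence every degree-$3$ vertex must have a pendant (degree-$1$) neighbor. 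Consequently $G$ is a path or a cycle with pendant edges attached, and the correct characterization is in terms of distances between consecutive such ``hairy'' vertices: for paths the distance must be $4$ or at least $8$; for cycles it must be $4$, $8$, $9$, or at least $11$, \emph{together with} a global condition---an even number of consecutive pairs at distance exactly $10$, unless some pair has distance $13$, $14$, or at least $16$. This last condition is not ``a finite set of permitted local patterns\dots expressible in terms of degrees, component shapes, and cycle lengths modulo a small constant'': it is a parity condition counted over the whole cycle, arising because the unique span-$4$ labeling of a distance-$10$ segment starts and ends with the same label, so such segments force $\lambda$-inversions that must cancel out globally. Your plan would not fail outright---a careful enumeration would eventually uncover all of this---but as written it neither contains these characterizations nor anticipates their non-local shape, so the proof is missing precisely where the theorem needs it.
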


Without loss of generality we deal with connected simple undirected graphs.

First observe, that for $\lambda < 4$ the graph cannot contain a vertex of degree $3$. We use $P_i$ as a symbol for the \emph{path} on $i$ vertices.

\begin{lemma}[Graphs with $\mathbf{\lcdj{G}\leq3}$]
	\begin{itemize}
		\item The only graphs with $\lcdj{G} = 0$ are $P_1$ or $P_2$.
		\item There is no graph with $\lcdj{G} = 1$. 
		\item The only graph with $\lcdj{G} = 2$ is $P_3$.
		\item Finally, graphs with $\lcdj{G} = 3$ are $P_4$ and $P_5$.
	\end{itemize}
\end{lemma}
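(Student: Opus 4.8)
The plan is to reduce first to paths and cycles, then treat the four possible values of the span in increasing order, with the case $\lcdj{G}=3$ carrying the real content.

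By Observation~\ref{thm:deltaLowerBound} a graph with $\lcdj{G}\le 3$ satisfies $2(\Delta-1)\le 3$, hence $\Delta\le 2$; since we work with connected graphs, $G$ is a path or a cycle. I then argue by the number of edges. If $\lcdj{G}=0$ only the label $0$ is available, so no two edges may be adjacent; a connected path or cycle with at least two edges always contains a pair of adjacent edges, so $G$ has at most one edge and is $P_1$ or $P_2$. The same remark rules out span $1$: two adjacent edges force their labels to differ by at least $2$, hence span at least $2$, so the span jumps from $0$ (at most one edge) directly to at least $2$ and no graph attains $\lcdj{G}=1$. For span $2$ the labels lie in $\{0,1,2\}$; any path or cycle with at least three edges contains an edge both of whose neighbours are adjacent to it, and these two neighbours, being at distance two, must receive distinct labels, yet in $\{0,1,2\}$ a central label admits no two distinct neighbours each at difference at least $2$. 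Thus span $2$ forces at most two edges, and a direct check (labels $0$ and $2$) makes $P_3$ the unique graph with $\lcdj{G}=2$.

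The heart of the lemma is $\lcdj{G}=3$, which I handle through an auxiliary compatibility graph $H$ on the vertex set $\{0,1,2,3\}$ in which two labels are joined exactly when their difference is at least $2$; a routine listing shows that $H$ is the path $2-0-3-1$. A span-$3$ labeling of a path $P_n$ is precisely a walk $a_1,\dots,a_{n-1}$ in $H$ (this encodes the adjacency constraint), and the distance-two constraint $|a_i-a_{i+2}|\ge 1$ says exactly that the walk never immediately backtracks, since in the tree $H$ the only way to return to a vertex in two steps is to retrace an edge. A non-backtracking walk in a path is monotone and therefore visits at most its four vertices, so $P_n$ admits such a labeling if and only if $n-1\le 4$, i.e.\ for $n\le 5$; explicit labelings $0,3,1$ for $P_4$ and $1,3,0,2$ for $P_5$ confirm the positive direction. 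Combined with the smaller spans already computed, the paths with $\lcdj{G}=3$ are exactly $P_4$ and $P_5$. For cycles the same correspondence turns a span-$3$ labeling of $C_n$ into a closed non-backtracking walk in $H$, which cannot exist because the tree $H$ has no cycle; hence no cycle admits span $3$ (the triangle already needs span $4$).

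The main obstacle is the span-$3$ analysis, and specifically the reformulation via the compatibility graph $H$: once one recognises that $H$ is a path and that the distance-two condition is precisely non-backtracking, both the bound on the admissible path length and the impossibility for cycles reduce to the absence of cycles in a tree. Checking that $H$ equals the path $2-0-3-1$ and that the exhibited labelings are valid is the only routine computation.
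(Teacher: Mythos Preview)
Your argument is correct. The paper actually states this lemma without proof, treating it as a direct observation once one knows $\Delta\le 2$ from Observation~\ref{thm:deltaLowerBound}; so there is no ``paper's proof'' to compare against beyond an implicit case check.

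Your handling of the cases $\lcdjf\le 2$ is the obvious one. For $\lcdjf=3$ your reformulation via the compatibility graph $H$ is a genuinely nice packaging: recognising that $H$ is the path $2\text{--}0\text{--}3\text{--}1$ and that the distance-two constraint is exactly non-backtracking turns the bound $n\le 5$ for paths and the impossibility for cycles into one-line consequences of $H$ being a tree. This is cleaner than a raw enumeration and makes transparent \emph{why} the cutoff is at $P_5$. The exhibited labelings and the verification of $H$ are routine and correct.
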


When $\lambda = 4$, the graph may contain vertices with degree $3$. We call a vertex {\em hairy} if it is of degree $3$ and at least one of its neighbors is of degree $1$. 
We call this degree one vertex, together with the connecting edge, {\em pendant}. Note that any vertex of degree $3$ in a graph $G$ satisfying $\lambda(G) = 4$ cannot have all its neighbors of degree $2$ or greater. It is easy to see that there is no labeling of span $4$ of such a graph. We say that two hairy vertices are {\em consecutive}, if there is no other hairy vertex on a path between them or if there is the only hairy vertex on a cycle. In this particular case the vertex is consecutive to itself.

For the purpose of the following lemmas, we say that a graph is a \emph{generalized cycle} if it is a cycle with several (possibly $0$) pendant edges. 
We say that a graph is a \emph{generalized path} if it is a path with several (possibly $0$)pendant edges.
All observations made in the last paragraphs imply the following lemma:

\begin{lemma}
Let $G$ be a graph satisfying $\lcdj{G}\le 4$, then $G$ is either a generalized path or a generalized cycle.
\end{lemma}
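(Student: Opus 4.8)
The plan is to distill from the hypothesis $\lcdj{G}\le 4$ exactly the two structural facts assembled in the preceding paragraphs—that $\Delta(G)\le 3$ and that every degree-$3$ vertex is hairy—and then to peel off the pendant edges to expose a path or a cycle underneath.

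First I would invoke the Max-degree lower-bound (Observation~\ref{thm:deltaLowerBound}): since $\lcdj{G}\ge 2\cdot(\Delta-1)$ and $\lcdj{G}\le 4$, we obtain $\Delta\le 3$. If $\lcdj{G}\le 3$ there is no vertex of degree $3$ at all, while if $\lcdj{G}=4$ every degree-$3$ vertex must have a neighbour of degree $1$, i.e.\ is hairy, as already observed (a degree-$3$ vertex all of whose neighbours have degree $\ge 2$ admits no span-$4$ labelling). Hence in every case each vertex of $G$ has degree at most $3$, and each degree-$3$ vertex carries at least one pendant.

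Next I would pass to the induced subgraph $H=G[\{v\in V : \deg_G(v)\ge 2\}]$ on the vertices of degree at least two. The key (routine) computation is to bound degrees in $H$: a degree-$2$ vertex of $G$ plainly keeps at most two neighbours in $H$, and a degree-$3$ vertex, being hairy, loses at least one leaf neighbour when passing to $H$, so it too keeps at most two. Thus $\Delta(H)\le 2$. I would then argue that $H$ is connected: every vertex of $G$ outside $H$ has degree $1$ and is therefore a dead end that cannot lie on a path joining two distinct components of $H$; since $G$ is connected, $H$ must be connected as well (the sole exceptions being $G=P_1$ or $G=P_2$, where $H$ is empty and $G$ is trivially a generalized path). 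A connected graph of maximum degree at most $2$ is a path or a cycle, possibly a single vertex, so $H$ is one of these.

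Finally I would reattach the low-degree vertices. Each vertex $u$ of $G$ outside $H$ has degree $1$, and its unique neighbour $w$ satisfies $\deg_G(w)\ge 2$—otherwise $\{u,w\}$ would be an entire component, forcing $G=P_2$—so $w\in H$ and $uw$ is a pendant edge attached to $H$. Therefore $G$ arises from the path or cycle $H$ by adding pendant edges, which is precisely the definition of a generalized path or a generalized cycle. The one point demanding care—the main obstacle—is the connectivity of $H$ together with the degenerate instances where $H$ is empty or a single vertex (namely $G\in\{P_1,P_2,P_3,K_{1,3},\dots\}$), all of which must be verified to fall under the "generalized path" case.
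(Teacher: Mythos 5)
Your proof is correct and takes essentially the same approach as the paper: the paper offers no formal proof, merely asserting that its preceding observations (namely $\Delta\le 3$, which follows from Observation~\ref{thm:deltaLowerBound}, and the fact that every degree-$3$ vertex must be hairy) imply the lemma, and your argument is exactly these observations with the leaf-peeling, the bound $\Delta(H)\le 2$, the connectivity of $H$, and the degenerate cases ($P_1$, $P_2$, $K_{1,3}$) spelled out explicitly. The extra rigor you supply fills in details the paper leaves implicit rather than departing from its route.
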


On the contrary not every generalized cycle or path has $\lcdjf \leq 4$. The following lemmas state all the conditions for a generalized cycle or path to satisfy $\lcdjf \leq 4.$

\paragraph{Notation in the proofs} Both proofs are done by a case analysis. For generalized cycles and paths the idea is to label path or cycle while there is the possibility to label all the pendant edges. To do so, we use sequences of numbers representing labels on edges.
Note that it follows from Observation~\ref{thm:deltaLowerBound} that only numbers $0,2,4$ can occur around a hairy vertex and any pendant vertex must get label $2$.
For labelings we use sequences of numbers describing labels of consecutive edges and a symbol $"|"$ for a hairy vertex---so there is a pendant edge on a vertex with label $2.$
This gives us immediately the following observation.

\imgw{eng-vrchol_4_2_1}{1}

\begin{observation}[The labeling of a hairy vertex and its neighborhood]\label{obs:hv}
The neighborhood of a hairy vertex can be labeled only by a sequence $0314|0314$ or its $\lambda$-inversion $4130|4130.$
\end{observation}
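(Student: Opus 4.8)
The plan is to argue entirely by forcing, starting from the three edges meeting at the hairy vertex $v$ and propagating outward along the path. First I would observe that these three edges are pairwise adjacent, so by definition their labels must pairwise differ by at least $2$; since the labels lie in $\{0,1,2,3,4\}$, the only triple with all pairwise differences at least $2$ is $\{0,2,4\}$. Hence the three edges at $v$ carry the labels $0$, $2$, and $4$ in some order, exactly as recorded in the remark preceding the statement.

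Next I would pin down which of the three edges receives $2$, claiming it must be the pendant. Suppose instead the pendant carried $0$ or $4$; applying a $\lambda$-inversion if necessary, I may assume it carries $0$, so the label $2$ sits on one of the two non-pendant (path) edges, call it $e^{*}$, whose far endpoint $u$ lies on the path and therefore has a further incident edge $e'$. Now $e'$ is adjacent to $e^{*}$ (label $2$) and at distance two from the two remaining edges at $v$ (labels $0$ and $4$): the adjacency forces $e'\in\{0,4\}$, while the two distance-two constraints forbid precisely $0$ and $4$. This is a contradiction, so no label is available for $e'$ and the path cannot be continued past $u$. Therefore the pendant must be labeled $2$, and the two path edges carry $0$ and $4$.

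With the pendant fixed at $2$ and the path edges forming the set $\{0,4\}$, I would finish by an outward propagation. Up to a $\lambda$-inversion (which also reflects the two sides) I may assume the path edge on one side of $v$ has label $4$ and the one on the other side has label $0$, so the local picture reads $\dots4\,|\,0\dots$ with pendant $2$. Each next edge is then uniquely determined: its immediate neighbor imposes the constraint $|\cdot|\ge 2$, while the edges two steps back (on the $4$-side these are the opposite path edge $0$ and the pendant $2$) impose $|\cdot|\ge 1$, and in each step exactly one label in $\{0,1,2,3,4\}$ survives. Carrying this out three steps on each side gives $1,3,0$ reading away from $v$ on the $4$-side and $3,1,4$ reading away on the $0$-side, i.e. the labeling $0314\,|\,0314$; the excluded orientation yields exactly its $\lambda$-inversion $4130\,|\,4130$.

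The step I expect to be the main obstacle is the one forcing the pendant to be $2$, since it rests on the fact that a hairy vertex of a generalized path or generalized cycle is internal to the path, so that both of its non-pendant edges genuinely continue and hence must admit a next label. For the degenerate configurations in which a non-pendant edge immediately terminates in a leaf (for instance the star $K_{1,3}$, or a path endpoint bearing two pendants) this extension argument breaks down; these are only finitely many small graphs and have to be checked separately, after which the deterministic propagation above closes the argument.
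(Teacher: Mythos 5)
Your argument is correct and is essentially the paper's own (implicit) proof: the paper likewise derives the label set $\{0,2,4\}$ at a hairy vertex from pairwise adjacency, asserts that the pendant edge must carry label $2$, and reads off the forced outward propagation $0314|0314$ (up to $\lambda$-inversion) illustrated in its figure. Your write-up merely makes explicit what the paper states as immediate---the contradiction forcing $2$ onto the pendant and the step-by-step uniqueness of the propagation---and your caveat about truncated neighborhoods (e.g. $K_{1,3}$ or a path end carrying two pendants) is a legitimate refinement rather than a departure, since in those cases the pattern simply applies to whichever edges exist.
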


\begin{lemma}\label{lem:gp}
Let $G=(V,E)$ be a generalized path. Let $W$ be the set of all hairy vertices.
Then $\lcdj{G} \leq 4$ if and only if for every consecutive pair $u,v\in W$ their distance $d = d_G(u,v)$
is either $4$, or at least $8$.
\end{lemma}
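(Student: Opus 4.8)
The plan is to prove both directions of the equivalence by analyzing how the forced labeling around each hairy vertex propagates along the path. By Observation~\ref{obs:hv}, the neighborhood of every hairy vertex is rigidly labeled by the block $0314$ (or its $\lambda$-inversion $4130$), so the labels on the edges incident to a hairy vertex are completely determined up to this global choice. The key idea is that the segment of the path strictly between two consecutive hairy vertices $u$ and $v$ must be labeled by a sequence that (i) starts compatibly with the forced block coming out of $u$, (ii) ends compatibly with the forced block going into $v$, and (iii) is itself a valid \lcdjl of a path segment. The distance $d = d_G(u,v)$ controls the length of this intervening segment, so the whole question reduces to a finite combinatorial check of which segment lengths can interpolate between two forced blocks.

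First I would establish the necessity direction (the ``only if''). Assuming $\lcdj{G}\le 4$, fix a valid labeling $f$. For a consecutive pair $u,v$, Observation~\ref{obs:hv} pins down the labels on the four edges around $u$ and around $v$. I would then enumerate, for each admissible small distance $d$, whether the forced endpoints can be joined by a valid labeling, using the facts from the preliminaries that only $0,2,4$ may appear around a hairy vertex, every pendant edge is labeled $2$, and adjacent labels differ by at least $2$ while distance-two labels differ by at least $1$. The cases $d=1,2,3$ and $d=5,6,7$ should be ruled out by showing that no valid sequence of the corresponding length bridges the two forced blocks (for instance, the block pattern forces a parity/spacing obstruction that a too-short connecting segment cannot resolve), while $d=4$ works because the two forced blocks share an edge pattern exactly, and $d\ge 8$ works because the segment is long enough to ``reset'' and re-enter the required block.

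Conversely, for the sufficiency direction (the ``if''), I would exhibit an explicit labeling. Using the sequence notation from the paper, I would write down the concrete block $0314|0314$ at each hairy vertex and then give the filler pattern for the segments of length $4$ and of length $\ge 8$ between consecutive hairy vertices, verifying that each such pattern respects the distance-one and distance-two constraints at the junctions and internally. For the two ends of the generalized path (which may terminate in non-hairy structure) one uses the freedom in the low-$\lambda$ path labelings guaranteed by the earlier lemma on paths. Concatenating these pieces yields a global labeling of span $4$, proving $\lcdj{G}\le 4$.

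The main obstacle I anticipate is the case analysis in the necessity direction, specifically the exhaustive verification that distances $d\in\{1,2,3,5,6,7\}$ are genuinely infeasible. This requires carefully tracking how the rigid $0314$ blocks at both endpoints constrain the interior labels and showing that the constraints are jointly unsatisfiable for precisely these lengths; the subtle point is that one endpoint block may be taken in its plain form and the other as its $\lambda$-inversion, so one must check both relative orientations of the two forced blocks rather than assuming they agree. Organizing this finite check cleanly---ideally by reducing to a short table of residues modulo the block length and the $\lambda$-inversion symmetry from the preliminaries---is where the real work lies.
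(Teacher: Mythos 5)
Your overall strategy coincides with the paper's: Observation~\ref{obs:hv} rigidifies the four edges on each side of a hairy vertex, the distances $1,2,3$ are excluded outright, the distances $5,6,7$ are excluded because the two forced blocks would have to overlap inconsistently, and the distances $4$ and $\ge 8$ are handled by exhibiting explicit sequences. The necessity half of your plan is sound and is exactly the paper's argument, including your correct remark that both relative orientations (plain block versus $\lambda$-inversion) must be checked.

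However, your sufficiency construction contains a step that would fail. You propose to ``write down the concrete block $0314|0314$ at each hairy vertex'' and then fill each gap independently, asserting that any $d\ge 8$ works because the segment is long enough to ``reset'' and re-enter the required block. This is false for $d=10$: there is no valid sequence of length $10$ that starts with $0$ and ends with $4$, i.e., none compatible with the same orientation $0314|0314$ at both of its endpoints. The only length-$10$ sequence (up to $\lambda$-inversion) is $|0314204130|$, which starts and ends with the label $0$ and therefore forces the blocks at its two endpoints to have opposite orientations ($0314|0314$ at one end, $4130|4130$ at the other). A correct proof must therefore let the orientation flip from gap to gap, choosing for each gap between a base sequence and its $\lambda$-inversion according to the label entering the shared hairy vertex; on a path this greedy choice always succeeds, which is why the lemma holds, and the forced flip at $d=10$ is precisely what later generates the extra parity condition for generalized cycles in Lemma~\ref{lem:gc}. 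Concretely, what is missing from your write-up (and what constitutes the actual content of the sufficiency direction) is one base sequence per residue class modulo $4$, e.g. $|0314|$ $(d=4)$, $|031420314|$ $(d=9)$, $|0314204130|$ $(d=10)$, $|03140240314|$ $(d=11)$, each extendable by prepending $0314$, which together cover $d=4$ and every $d\ge 8$.
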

\begin{proof}
We need to show that each sequence can be correctly labeled or that it is impossible to label it at all.

The easier fact is the existence of correct labelings.
Sequences $|0314|(d=4), |031420314|(d=9),|0314204130|(d=10), |03140240314|(d=11)$ can be extended by a sequence $0314$ at the beginning to get sequences of length at least $8.$

Now we have to show that there are no valid sequences of length $1,2,3,5,6,7.$
Observation~\ref{obs:hv} banns immediately sequences of length $1,2,3.$
Furthermore, the same observation also implies that there is no chance to overlap two sequences which is necessary to get lengths $5,6$ or $7$.\qed
\end{proof}

\begin{lemma}\label{lem:gc}
Let $G=(V,E)$ be a generalized cycle. Let $W$ be the set of all hairy vertices.
Then $\lcdj{G} \leq 4$ if and only if for every consecutive pair $u,v\in W$ their  distance $d = d_G(u,v)$ fulfills one of the following:
\begin{itemize}
  \item $d = 4,8,9$ or $d \ge 11$,
  \item if there exists a consecutive pair with $d = 10$, then there is even number of such consecutive pairs, or there exists a consecutive pair with $d = 13,14,16$ or greater.
\end{itemize}
\end{lemma}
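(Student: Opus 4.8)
The plan is to reduce the labelling of a generalized cycle to a parity constraint on the \emph{orientations} of its hairy vertices and then to read off the stated distance conditions. Fix a cyclic orientation of the cycle underlying $G$ and let $h_1,\dots,h_k$ be the hairy vertices in this order, taken cyclically (a single hairy vertex being consecutive to itself), with $d_i=d_G(h_i,h_{i+1})$ the length of the arc $A_i$ between consecutive ones. By Observation~\ref{obs:hv} the four edges nearest to any hairy vertex are forced, and the two admissible local patterns are exactly a sequence and its $\lambda$-inversion; hence each $h_j$ carries a binary \emph{orientation} $o(h_j)\in\{0,1\}$, the first four and last four edges of every arc $A_i$ are determined by $o(h_i)$ and $o(h_{i+1})$, and only the $d_i-8$ middle edges are free. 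Thus a valid labelling of $G$ is exactly a choice of orientations together with, for each arc, a valid completion of its middle edges.

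First I would classify, for each length $d$, which \emph{relation} between the orientations of the two endpoints an arc of length $d$ admits. Since the distance-one and distance-two constraints involve only the two preceding labels, the admissible label sequences are precisely the walks in the finite automaton whose states are the ordered pairs $(x,y)$ with $|x-y|\ge 2$ and whose transition $(x,y)\to(y,z)$ is allowed when $|z-y|\ge 2$ and $z\neq x$. The forced start pattern fixes the initial state, and the two admissible forced end patterns correspond to two distinguished target states; so the relation realised by an arc of length $d$ is governed by whether a walk of length $d-4$ from the initial state reaches the target for \emph{equal} orientations, the target for \emph{opposite} orientations, or both. Computing the reachable sets step by step (equivalently, powers of the transition relation) yields: no arc of length $1,2,3$ or $5,6,7$ can be completed, recovering the obstruction of Lemma~\ref{lem:gp}; an arc of length $4$ forces equal orientations; an arc of length $10$ forces opposite orientations; all sufficiently long arcs admit both relations; and the remaining short feasible lengths force equal orientations. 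This finite reachability computation is the technical core of the argument.

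With this classification, the global problem becomes a constraint satisfaction on the cyclic sequence $h_1,\dots,h_k$: each arc imposes an equality, an inequality, or no constraint on the orientations of its endpoints, and $G$ is labellable if and only if these constraints on a single cycle are simultaneously satisfiable. A cyclic chain of equality/inequality constraints is satisfiable exactly when the number of inequalities is even, while one unconstrained (\emph{free}) arc breaks the cycle into a path and makes the system always satisfiable. Because the only arcs forcing an inequality are those of length $10$ and the free arcs are precisely the long ones found above, $\lcdj{G}\le 4$ holds if and only if every $d_i$ lies in $\{4\}\cup\{d\ge 8\}$ and, in addition, either the number of arcs of length $10$ is even or at least one arc is long enough to be free. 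Translating the feasible set and the free-length threshold into explicit values gives precisely the two itemised conditions of the statement.

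The main obstacle is the middle step: pinning down, without omission, the exact set of lengths that are rigid (forcing equal or opposite orientations) versus free. The length-$10$ rigidity---showing that no completion with equal orientations exists---and the determination of the smallest lengths from which both relations are simultaneously reachable are delicate finite checks, because the reachable-state sets are \emph{not} monotone in the walk length. These exceptional short lengths are exactly what separates the cycle case from the simpler path case of Lemma~\ref{lem:gp}.\qed
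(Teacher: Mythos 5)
Your framework is sound and in spirit it matches the paper's own proof: both arguments assign each hairy vertex one of two orientations (the pattern $0314|0314$ or its $\lambda$-inversion, by Observation~\ref{obs:hv}), classify each arc between consecutive hairy vertices as forcing equal orientations, forcing opposite orientations, or admitting both, and then reduce labelability of the cycle to a parity condition: an even number of opposite-forcing arcs unless some arc is free. Your transfer-automaton formulation of the per-arc classification is a legitimate, and arguably cleaner, way to organize the case analysis that the paper performs by hand with explicit label sequences.

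However, the proposal has a genuine gap exactly where the content of the lemma lies: the classification is never actually carried out, and the summary you give of its outcome could not produce the statement. You claim the reachability computation ``yields'' that length $10$ forces opposite orientations, that \emph{all sufficiently long} arcs admit both relations, and that the \emph{remaining short} feasible lengths force equal orientations. That is a threshold statement, and no threshold reproduces the lemma's values: lengths $13$ and $14$ admit an opposite-orientation completion while the \emph{longer} length $15$ does not. You even note that the reachable sets are not monotone in the length, which contradicts your own ``sufficiently long'' phrasing; this non-monotonicity is precisely why $13,14,16$ appear in the statement. Consequently the final translation ``at least one arc is long enough to be free'' does not give the second itemized condition, and nothing in the proposal shows that $11$, $12$ and $15$ are equality-forcing. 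The paper closes exactly this gap constructively: it exhibits the opposite-orientation sequences $|4130240240314|$ ($d=13$), $|41302403140314|$ ($d=14$) and $|413041302403140314|$ ($d=18$) in which the block $024$ is repeatable, and it rules out $11$, $12$, $15$ by the connector argument---any start-$0$/end-$0$ sequence must be $|0314\cdots 4130|$ glued through a connector, which forces the base sequence $0314204130$ of length $10$ and admits only the extensions $0314$ at the front, $4130$ at the end, or $420$ after the connector. A smaller omission: your reduction presupposes $W\neq\emptyset$, whereas the lemma also asserts (vacuously over pairs) that cycles without hairy vertices are labelable, a case the paper checks separately.
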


Firstly it is easy to observe that cycles of any length without a hairy vertex can be labeled correctly.

The proof of the first part is similar to Lemma~\ref{lem:gp}, except for the sequences of length $10$. Because such a sequence cannot be connected via hairy vertex to any sequence presented in the proof of Lemma~\ref{lem:gp}, unless we use a $\lambda$-inversion of some of them. So in the proof of the second part we need to show two things:
\begin{itemize}
	\item The only labeling of a sequence of length $10$ is the one already presented.
	\item The sequences of length less than or equal to $12$ and $15$ do not have a labeling that starts and ends by the label $0$, while sequences of all other possible lengths admit such a labeling.
\end{itemize}

These arguments are proved by a case analysis that is postponed to the appendix.

\section{\NP-complete cases}

\begin{theorem}\label{thm:np}
The problem \Lproblem is \NP-complete for every fixed $\lambda \geq 5$.
\end{theorem}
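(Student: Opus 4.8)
The plan is to prove \NP-completeness in the usual two parts. Membership in \NP is immediate: given a candidate labeling $f:E\to\{0,\dots,\lambda\}$ one checks in time polynomial in $|E|$ that every pair of edges at distance one differs by at least $2$ and every pair at distance two differs by at least $1$, simply by inspecting each vertex and its second neighborhood in the line graph. The substance of the theorem is therefore \NP-hardness, which I would establish by a polynomial-time reduction from \MNproblem. This is the natural source problem here: the ``not all equal'' requirement translates cleanly into a local constraint on labels, and its monotone form spares us from having to build separate machinery for negated literals.

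I would first settle the base case $\lambda=5$ and then lift. For $\lambda=5$ Observation~\ref{thm:deltaLowerBound} forces the maximum degree of any labelable graph to be at most $3$, and a direct enumeration shows that the three edges incident to a degree-$3$ vertex must carry one of the triples $\{0,2,4\},\{0,2,5\},\{0,3,5\},\{1,3,5\}$ (or their $\lambda$-inversions). This small, rigid set of admissible configurations is exactly what makes degree-$3$ vertices usable as propagation and branching gadgets. I would engineer a \emph{variable gadget} that admits precisely two globally consistent labelings---one ``true'' and one ``false''---by choosing the distances between its hairy/degree-$3$ vertices so that, by Observation~\ref{obs:hv} and the rigidity above, only the gadget's two inversion-related extension patterns survive. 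For each clause I would build a \emph{clause gadget} joining the three incident variable wires at a common junction, with the labels available at that junction designed so that a consistent completion exists if and only if the three incoming bits are not all equal, mirroring the \MNpr constraint.

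Correctness would then be argued in both directions. For the forward direction, from a satisfying not-all-equal assignment I set every variable gadget to the corresponding state and verify that each clause junction can be completed, yielding a labeling of span $5$. For the converse, any span-$5$ labeling must, by the rigidity observations, place each variable gadget into one of its two states, inducing a $0/1$ assignment; since the clause gadgets admit a labeling only when the corresponding clause is satisfied in the not-all-equal sense, this assignment is a witness. I would take particular care to check that the pendant structure introduces no unintended degrees of freedom and that wires of differing parity can actually be matched at the junctions.

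To obtain hardness for every fixed $\lambda\geq 5$ I would, to the $\lambda=5$ construction, attach auxiliary ``filler'' structures at selected vertices whose high-degree distance constraints consume the extra labels $6,\dots,\lambda$, so that the effective span left for the core gadgets is again $5$; I expect the paper's ``several reductions'' to handle a few small residues of $\lambda$ individually before this scaling argument applies cleanly. The main obstacle is not the high-level scheme but the gadget design together with its \emph{rigidity proof}: one must rule out, using only the distance-one and distance-two inequalities, every stray global relabeling that could defeat the intended two-state behavior of a variable gadget or circumvent the not-all-equal enforcement at a clause junction. Establishing this rigidity is the delicate combinatorial core of the argument, and it is precisely this separation between span $5$ (satisfiable) and span $6$ (unsatisfiable) that yields the $6/5$ inapproximability corollary under $\PTIME\neq\NP$.
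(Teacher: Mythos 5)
Your \NP-membership argument, your choice of \MNpr as the source problem, and your plan for the base case $\lambda=5$ (a two-state variable gadget plus a not-all-equal clause junction, with rigidity established by enumerating the admissible label triples at degree-$3$ vertices) all match the paper's treatment of $\lambda=5$ in its appendix. The genuine gap is your lifting step: attaching ``filler'' structures that ``consume the extra labels $6,\dots,\lambda$, so that the effective span left for the core gadgets is again $5$.'' This does not work, for two reasons. First, consumption of labels is inherently local: an edge is constrained only by edges at distance at most two in the line graph, so a filler attached at one vertex restricts nothing elsewhere; you would need fillers within distance two of \emph{every} core edge, and these fillers would then interact with one another and with the gadget structure you designed for span $5$. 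Second, and fatally, it is not enough that the fillers \emph{can} carry the labels $6,\dots,\lambda$; your converse direction requires that in \emph{every} valid labeling they are \emph{forced} to, since otherwise a stray labeling can put low labels on a filler and high labels on the core, destroying the correspondence with truth assignments. No such forcing is available: by the paper's symmetry observation ($\lambda$-inversion), any structure admitting a labeling by high labels also admits one by low labels, and distinct fillers far apart can be labeled in mutually inconsistent ways, so the core edges regain access to the whole range $\{0,\dots,\lambda\}$. The only forcing that distance-two constraints actually provide is the parity forcing of Lemma~\ref{lem:max}: edges at a maximum-degree vertex must use the set $\El$ (or a variant of $\Ol$), i.e.\ an arithmetic-progression-like set, never an interval such as $\{0,\dots,5\}$.

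This is precisely why the paper does not pad down from $\lambda=5$. Instead it builds, for each parity class, gadgets whose structure scales with $\lambda$: variable gadgets built around vertices of degree $\max$ and $\max-1$, where Lemmas~\ref{lem:even} and~\ref{lem:odd} force the connecting edge to carry $0$ or $\lambda$ (this label pair encodes the truth value), and Lemma~\ref{lem:joint} (proved via Hall's theorem on an auxiliary bipartite graph) guarantees the repeatable parts can always be completed. The even case $\lambda\ge 8$ and the odd case $\lambda\ge 9$ need genuinely different constructions---note that in Lemma~\ref{lem:even} the two adjacent high-degree vertices have \emph{different} degrees while in Lemma~\ref{lem:odd} they have equal degrees, a distinction your uniform lift ignores entirely---and the residual values $\lambda=5,6,7$ each get their own ad hoc gadgets. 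To repair your proof you would have to replace the filler idea with parity-based forcing gadgets of this kind, i.e.\ essentially redo the paper's Sections~\ref{sec:even} and~\ref{sec:odd}.
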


The proof of the hardness result is done for every $\lambda\geq5$. However as 
there is a natural difference between odd and even $\lambda$, the proof is divided according to the parity of $\lambda$ to two basic general cases. The proof of the even (odd) part is contained in Subsection~\ref{sec:even} and \ref{sec:odd} respectively.

Furthermore, as the gadgets developed to carry the labeling does not work for small cases, we have to exclude the borderline values $\lambda = 5,6,7$ from the general proof.
Due to space limitations, we move these proofs to the appendix.

Our basic reduction tool is the \MNproblem problem which all cases are reduced from. We say a formula $\varphi$ is a \emph{3-MCNF (monotone conjunctive normal form)} if it is a conjunction of clauses with exactly 3 logical variables without negations.

\prob{\MNproblem problem (also known as \MNpr)}{A 3-MCNF formula $\varphi.$}{Is it possible to find an assignment such that each clause has at least one literal set to true and at least one literal set to false?}{def:mnae3sat}

This problem is a specialized version of {\sc NAE-3-SAT}, which was shown to be \NP-complete by Schaefer~\cite{schaefer} by a more general argument about CSP's. We can find \MNpr in the list of \NP-complete problems in the monograph of Garey and Johnson~\cite{Garey}.

\paragraph{The reduction procedure}
For a 3-MCNF formula $\varphi$ and positive integer $\lambda\ge 5$ we show how to build a graph $G_\varphi^\lambda$. We will ensure that $\lcdj{G_\varphi^\lambda}\le\lambda$ if and only if the answer to the question of \MNpr problem is "YES".
In our proofs the main focus is to prove the correspondence between a satisfying assignment to the variables of $\varphi$ and the $\lambda$-labeling of the graph $G_\varphi^\lambda$. We call this the {\em correctness of a gadget}.

\begin{definition}[Odd and Even sets]
For any $\lambda\in\N$ we define two subsets of the set $\{0,\ldots,\lambda\}.$
The {\em odd subset} $\Ol = \{l\in\N\colon l\le\lambda, l\textrm{ odd}\}$ and the {\em even subset} $\El = \{l\in\N\colon l\le\lambda, l\textrm{ even}\}.$
\end{definition}

\begin{example}
Take $\lambda = 10$ (even). Now according to Observation~\ref{thm:deltaLowerBound}, the maximum possible degree of a vertex in a graph admitting a distance labeling with $\lambda$ labels is $6.$ Moreover, only labels from the set $\El$ can appear on edges incident with such a vertex.
\end{example}

\subsection{Basic lemmas}
We state here some auxiliary lemmas that are used in our reductions.

\begin{lemma}[Labeling of edges incident to a maximum degree vertex]\label{lem:max}
	Let $\lambda\in\N$, let $G$ be a graph with $\lcdj{G}\leq\lambda$ and its maximum degree vertex $v$.

	Then:
	\begin{description}
		\item[\textbf{even $\mathbf{\lambda}$}:]  If $deg(v)=\frac{\lambda}{2}+1$ then vertex $v$ has its incident edges labeled by labels from the set $\El$. 
		\item[\textbf{odd $\mathbf{\lambda}$}:] If $deg(v)=\frac{\lambda + 1}{2}$ then a vertex $v$ has its incident edges labeled by labels from the one of the sets: $\Ol$, $\Ol\setminus\{1\}\cup\{0\}$, $\El$ or $\El\setminus\{\lambda-1\}\cup\{\lambda\}$.
	\end{description}
\end{lemma}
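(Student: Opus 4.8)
The plan is to exploit the rigidity forced by a vertex of near-maximum degree: when a vertex $v$ has degree exactly $\frac{\lambda}{2}+1$ (even case) or $\frac{\lambda+1}{2}$ (odd case), the number of incident edges is so large relative to the available labels $\{0,\dots,\lambda\}$ that the distance-one constraint (labels differing by at least $2$) leaves almost no freedom. First I would count: the edges incident to $v$ are pairwise at distance one in $G$ (they all share the endpoint $v$), so by the edge-labeling definition their labels must be pairwise at distance at least $2$. Hence any set of labels used on the $\deg(v)$ incident edges must be a set of integers in $\{0,\dots,\lambda\}$ that are pairwise $\ge 2$ apart.

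The combinatorial core is a simple packing bound. If I must place $k$ pairwise-$\ge 2$-separated integers into $\{0,\dots,\lambda\}$, the smallest possible span of such a set is $2(k-1)$, achieved only by an arithmetic-progression-like spread. For the even case, $k=\frac{\lambda}{2}+1$ forces a set of span exactly $2\left(\frac{\lambda}{2}\right)=\lambda$, i.e. the labels must occupy the full range and be as spread out as possible; the unique such set of $\frac{\lambda}{2}+1$ elements pairwise $\ge 2$ apart inside $\{0,\dots,\lambda\}$ is $\{0,2,4,\dots,\lambda\}=\El$. I would make this precise by a short pigeonhole/interval argument: consecutive chosen labels leave a gap of at least $2$, there are $k-1$ gaps, so the total span is $\ge 2(k-1)=\lambda$, forcing every gap to equal exactly $2$ and the endpoints to be $0$ and $\lambda$, which pins down $\El$.

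For the odd case the slack is slightly larger. Here $k=\frac{\lambda+1}{2}$, so the minimal span is $2(k-1)=\lambda-1$, which is one less than $\lambda$. This extra unit of freedom is exactly what produces the four listed candidate sets rather than a single one. I would argue that the $k-1$ gaps sum to at most $\lambda$ (the full range) and each is at least $2$; since $(k-1)\cdot 2=\lambda-1$, at most one gap can exceed $2$ (and only by one), or the two endpoints can shift. Casing on whether the smallest used label is $0$ or $1$ and whether the largest is $\lambda$ or $\lambda-1$, together with the parity of the resulting arithmetic progression, yields precisely $\Ol$, $\Ol\setminus\{1\}\cup\{0\}$, $\El$, and $\El\setminus\{\lambda-1\}\cup\{\lambda\}$ as the only feasible label sets.

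The main obstacle I anticipate is making the odd-case enumeration airtight: I must show that no other configuration (e.g. one where an interior gap is widened while an endpoint is also moved) fits within the budget, and that each of the four listed sets is genuinely realizable as a pairwise-$\ge 2$-separated set of the right cardinality. The clean way to handle this is to reduce everything to the single inequality $\sum_{i}g_i \le \lambda$ with $g_i\ge 2$ and exactly $k-1=\frac{\lambda-1}{2}$ gaps, so that the total excess over the minimum is $\sum_i(g_i-2) + s_0 + s_\lambda \le 1$, where $s_0,s_\lambda\in\{0,1\}$ measure how far the endpoints are pulled in from $0$ and $\lambda$; since the total excess is at most $1$, an easy finite case check over where that single unit of slack is spent gives exactly the four sets. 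I do not expect the even case to present any real difficulty beyond the packing bound itself.
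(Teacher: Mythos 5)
Your even case is sound: the packing bound forces the $\frac{\lambda}{2}+1$ pairwise $\ge 2$-separated labels to have span exactly $\lambda$, all gaps equal to $2$, and endpoints $0$ and $\lambda$, which pins the set down to $\El$. The genuine gap is in the odd case, at exactly the step you flagged as the main obstacle and then waved away. Your slack budget is right: with $k=\frac{\lambda+1}{2}$ labels, the total excess $\sum_i(g_i-2)+s_0+s_\lambda$ is at most $1$. But that single unit of slack may be spent widening \emph{any} interior gap from $2$ to $3$, and this configuration is feasible and is not one of your four outcomes. Concretely, for $\lambda=9$ and $\deg(v)=5$, the set $\{0,2,5,7,9\}$ is pairwise $\ge2$-separated inside $\{0,\dots,9\}$, yet it equals none of $\Ol=\{1,3,5,7,9\}$, $\Ol\setminus\{1\}\cup\{0\}=\{0,3,5,7,9\}$, $\El=\{0,2,4,6,8\}$, $\El\setminus\{8\}\cup\{9\}=\{0,2,4,6,9\}$. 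The correct outcome of the slack analysis is $\Ol$, $\El$, and the whole family $\{0,2,\dots,2j\}\cup\{2j+3,2j+5,\dots,\lambda\}$ for $0\le j\le\frac{\lambda-3}{2}$; the lemma's remaining two sets are only the extreme members $j=0$ and $j=\frac{\lambda-3}{2}$ of that family.

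Consequently no argument along these lines can close, because the statement read literally is false for odd $\lambda\ge7$: the star $K_{1,(\lambda+1)/2}$ labeled $0,2,5,7,\dots,\lambda$ satisfies $\lcdj{K_{1,(\lambda+1)/2}}\le\lambda$ and violates the conclusion. (The paper offers no proof of this lemma at all, treating it as immediate, so there is nothing to compare your write-up against; your attempt in fact exposes the imprecision.) The four-set conclusion does hold in every context where the paper applies the lemma, e.g.\ in Lemmas~\ref{lem:even} and~\ref{lem:odd}, because there $v$ is adjacent to a second vertex of maximum degree: the two incident-edge label sets may share only the single label of the common edge, whereas each interior-gap set ($1\le j\le\frac{\lambda-5}{2}$) intersects every other admissible set in at least two labels, so it can never occur. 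To make your proof correct you should either prove the weaker classification above (which suffices for the paper's applications once you add this intersection argument), or build the adjacency hypothesis into the statement and eliminate the interior-gap sets that way.
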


\begin{lemma}[Adjacent vertices with maximum degree, even span]\label{lem:even}
	Let $\lambda\in\N$, $\lambda$ even and let $G=(V,E)$ be a graph with $\lcdj{G}\leq\lambda$. Take two neighboring vertices $u,v\in V$ such that $deg(u)=\frac{\lambda}{2}+1$, $deg(v)=\frac{\lambda}{2}$ and $\{u,v\}\in E$.

	Then there are only two possibilities:
	\begin{itemize}
		\item The edge $\{u,v\}$ is labeled by $0$, all the edges incident to $u$ are labeled by the elements from the set $\El\setminus\{0\}$ and finally all the edges incident to $v$ are labeled by the elements from the set $\Ol\setminus\{1\}$.
		\item The edge $\{u,v\}$ is labeled by $\lambda$, all the edges incident to $u$ are labeled by the elements from the set $\El\setminus\{\lambda\}$ and finally all the edges incident to $v$ are labeled by the elements from the set $\Ol\setminus\{\lambda-1\}$.
	\end{itemize}
\end{lemma}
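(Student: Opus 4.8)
The plan is to use Lemma~\ref{lem:max} to pin down the labels around $u$ completely, and then propagate the resulting constraints across the edge $\{u,v\}$ into the neighborhood of $v$ by a counting argument. First I would note that by Observation~\ref{thm:deltaLowerBound} a vertex of degree $\frac{\lambda}{2}+1$ is necessarily of maximum degree, so Lemma~\ref{lem:max} applies to $u$ and every edge incident to $u$ carries a label from $\El$. Since the $\frac{\lambda}{2}+1$ edges at $u$ are pairwise adjacent, their labels are pairwise distinct, and as $|\El|=\frac{\lambda}{2}+1$ they realise a bijection onto $\El$. In particular the edge $\{u,v\}$ receives some even label $c$, and the remaining edges at $u$ receive exactly the labels of $\El\setminus\{c\}$.

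Next I would turn to the $\frac{\lambda}{2}-1$ edges incident to $v$ other than $\{u,v\}$. Each such edge $e$ is adjacent to $\{u,v\}$, whence $|f(e)-c|\ge 2$ and thus $f(e)\notin\{c-1,c,c+1\}$. Moreover, for every edge $e'$ incident to $u$ with $e'\neq\{u,v\}$, the edges $e$ and $e'$ are at distance at most two in the line graph (the walk $e,\{u,v\},e'$ has length two, and should $e$ and $e'$ share a vertex they are even adjacent), so $f(e)\neq f(e')$; hence $f(e)\notin\El\setminus\{c\}$. Combining the two exclusions yields $f(e)\notin\El\cup\{c-1,c+1\}$, which forces $f(e)$ to be an odd label different from $c-1$ and $c+1$. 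Because the edges at $v$ are pairwise adjacent, these odd labels are pairwise distinct, so they constitute a set of $\frac{\lambda}{2}-1$ distinct elements of $\Ol\setminus\{c-1,c+1\}$.

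The crux, and the step I expect to be the only genuine obstacle, is then a cardinality comparison that determines $c$. If $0<c<\lambda$, then both $c-1$ and $c+1$ lie in $\Ol$, so $|\Ol\setminus\{c-1,c+1\}|=\frac{\lambda}{2}-2$, which is too small to accommodate $\frac{\lambda}{2}-1$ distinct labels; this is impossible. Therefore $c\in\{0,\lambda\}$. If $c=0$, only $c+1=1$ is removed, so $\Ol\setminus\{1\}$ has exactly $\frac{\lambda}{2}-1$ elements and the edges at $v$ must use all of them, while the other edges at $u$ use $\El\setminus\{0\}$; this is the first alternative. The case $c=\lambda$ is entirely symmetric and gives the second alternative; indeed it is precisely the $\lambda$-inversion $f'(e)=\lambda-f(e)$ of the first, which is why the two possibilities exhaust the options and the proof concludes.
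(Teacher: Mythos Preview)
Your proof is correct and follows exactly the approach the paper indicates: the paper merely states that this lemma (together with Lemma~\ref{lem:odd}) ``is an easy application of Lemma~\ref{lem:max}'', and what you have written is precisely that application spelled out in full, with the counting argument on $|\Ol\setminus\{c-1,c+1\}|$ being the natural way to pin down $c\in\{0,\lambda\}$.
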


\imgw{eng-maxdeg_both}{1}

\begin{lemma}[Adjacent vertices with maximum degree, odd span]\label{lem:odd}
	Let $\lambda\in\N$, $\lambda$ odd and let $G=(V,E)$ be a graph with $\lcdj{G}\leq\lambda$. Take two neighboring vertices $u,v\in V$ such that $deg(u)=deg(v)=\frac{\lambda+1}{2}$.

	Then there are only two possibilities:
	\begin{itemize}
		\item The edge $\{u,v\}$ is labeled by $0$, all the edges incident to $u$ are labeled by the elements from the set $\El\setminus\{0\}$ and finally all the edges incident to $v$ are labeled by the elements from the set $\Ol\setminus\{1\}$.
		\item The edge $\{u,v\}$ is labeled by $\lambda$, all the edges incident to $u$ are labeled by the elements from the set $\El\setminus\{\lambda-1\}$ and finally all the edges incident to $v$ are labeled by the elements from the set $\Ol\setminus\{\lambda\}$.
	\end{itemize}
\end{lemma}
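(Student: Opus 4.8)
The plan is to combine a short global counting argument with an enumeration of the four candidate label sets that Lemma~\ref{lem:max} supplies. Since $\deg(u)=\deg(v)=\frac{\lambda+1}{2}$ is the largest degree compatible with $\lcdj{G}\le\lambda$ (Observation~\ref{thm:deltaLowerBound}), Lemma~\ref{lem:max} applies to both endpoints: the set $S_u$ of labels on edges incident to $u$ is contained in one of $\Ol$, $\Ol':=(\Ol\setminus\{1\})\cup\{0\}$, $\El$, $\El':=(\El\setminus\{\lambda-1\})\cup\{\lambda\}$, and likewise for $S_v$. Because edges sharing a vertex are adjacent and hence receive labels differing by at least $2$, the $\frac{\lambda+1}{2}$ edges at $u$ carry $\frac{\lambda+1}{2}$ pairwise distinct labels; as each candidate set has exactly $\frac{\lambda+1}{2}$ elements, $S_u$ must equal one of the four sets, and the same holds for $S_v$.

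The key structural step is to pin down the overlap of $S_u$ and $S_v$. Let $c=\ell(\{u,v\})$ be the label of the shared edge, so $c\in S_u\cap S_v$. For any other edge $e$ incident to $u$ and any other edge $e'$ incident to $v$, the path $e,\{u,v\},e'$ witnesses $d(e,e')\le 2$, so $\ell(e)\ne\ell(e')$; moreover each such $e$ is adjacent to $\{u,v\}$, forcing $\ell(e)\ne c$, and symmetrically $\ell(e')\ne c$. Hence the only common label is $c$, i.e. $S_u\cap S_v=\{c\}$. (This is corroborated by counting: the $\deg(u)+\deg(v)-1=\lambda$ edges meeting $\{u,v\}$ use $\lambda$ distinct labels out of $\{0,\dots,\lambda\}$, so inclusion–exclusion forces $|S_u\cap S_v|=1$.)

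It then remains to enumerate which pairs of the four candidate sets meet in exactly one element. Writing $\lambda=2k+1$, a direct check gives $\Ol\cap\El=\emptyset$, while $\Ol\cap\Ol'$ and $\El\cap\El'$ each have $k$ elements and $\Ol'\cap\El'=\{0,\lambda\}$; the only singleton intersections are $\Ol\cap\El'=\{\lambda\}$ and $\Ol'\cap\El=\{0\}$ (the hypothesis $\lambda\ge5$, i.e. $k\ge2$, is what makes the size-$k$ cases genuinely too large). Therefore either $c=0$ with $\{S_u,S_v\}=\{\El,\Ol'\}$, or $c=\lambda$ with $\{S_u,S_v\}=\{\El',\Ol\}$. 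Naming the endpoint carrying the even-based set $u$ and the other $v$, and deleting the shared label $c$ from each set (so that $\El\setminus\{0\}$ and $\Ol'\setminus\{0\}=\Ol\setminus\{1\}$ in the first case, and $\El'\setminus\{\lambda\}=\El\setminus\{\lambda-1\}$ and $\Ol\setminus\{\lambda\}$ in the second), reproduces exactly the two possibilities in the statement.

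I expect the main obstacle to be bookkeeping rather than depth: one must verify that Lemma~\ref{lem:max} forces $S_u,S_v$ to be the \emph{full} sets rather than proper subsets, that the distance-two argument excludes every cross label including $c$ itself, and that the four-by-four intersection table behaves uniformly for all odd $\lambda\ge5$. Finally, since $u$ and $v$ play symmetric roles (equal degrees), assigning the even-based set to a specific endpoint is a matter of naming, which is why the statement records two cases rather than four.
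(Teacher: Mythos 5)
Your proof is correct and takes the same route as the paper, which disposes of this lemma in a single line as ``an easy application of Lemma~\ref{lem:max}''; your writeup just fills in that application (the candidate sets must be hit in full by a counting argument, the distance-one and distance-two constraints force $S_u\cap S_v$ to be the singleton $\{c\}$, and the four-by-four intersection table leaves only the two stated cases). One cosmetic remark: the lemma as stated assumes only that $\lambda$ is odd, not $\lambda\ge5$ as your parenthetical invokes, but for $\lambda=3$ the sets $\Ol\setminus\{1\}\cup\{0\}$ and $\El\setminus\{\lambda-1\}\cup\{\lambda\}$ coincide, so the extra singleton intersections your enumeration would otherwise miss collapse onto the two listed cases, and the conclusion (as well as the paper's use of the lemma, which only concerns $\lambda\ge7$) is unaffected.
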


Proof of both lemmas above is an easy application of Lemma~\ref{lem:max}.

\paragraph{Notation in gadgets}
We further use $\max$ as the number for the maximum degree in graph $G$ with $\lcdj{G}\leq \lambda.$
We also use directed edges in gadget graphs. An outgoing edge represents an {\em output}, while an ingoing edge represents an {\em input} to the gadget.
We build all the gadgets so that the labels on output edges can take only several values.

\imgw{eng-sude_liche}{0.8}

\begin{lemma}[A correct labeling of joint even and odd part]\label{lem:joint}
	Let $\lambda\in\N$, let $G$ be a graph with $\lcdj{G}\leq \lambda$ and $H$ be its subgraph represented by complete bipartite graph $K_{2,\max-1}$ such that:
	\begin{itemize}
		\item  The only two edges connecting $G\setminus H$ to $H$ are $e_1$ and $e_2$, where $u\in e_1$ and $v\in e_2$. 
		\item The graph $H$ contains vertices $u\neq v$, $deg_G(u)=deg_G(v)\geq 4$ and their common neighbors, call them $N$. Vertices from $N$ are not adjacent, but exactly one of them $w$ may have zero, one or two other neighbors outside $H$.
		\item Moreover, each edge $\{u,z\}, z\in N$ can be labeled only by odd labels $(\Ol)$ and each edge $\{v,z\}, z\in N$ can be labeled only by even labels $(\El)$ and has no other condition on them from the rest of $G$. (It's essential that they can be labeled by arbitrary label of appropriate set except the labels of edges $e_1$ and $e_2$.)
	\end{itemize}
	We have four cases which depends on labels of $e_1$ and $e_2$, on the degree of $u$ and $v$ and on the number of neighbors of $w$.
	If one of the following cases happen:
	\begin{itemize}
		\item[I.] Both $e_1,\ e_2$ have label $0$, $deg_G(u)=deg_G(v)=\max$ and the vertex $w$ has one output edge. (for $\lambda$ odd)
		\item[II.] Both $e_1,\ e_2$ have label $0$, $deg_G(u)=deg_G(v)=\max -1$ and vertex $w$ has two output edges. (for $\lambda$ even)

		\item[III.] The edge $e_1$ has label $2$ and edge $e_2$ has label $3$ and $deg_G(v)=deg_G(u)=\max -1$. (for $\lambda$ odd)
		\item[IV.] The edge $e_1$ has label $4$ and edge $e_2$ has label $5$  and $deg_G(v)=deg_G(u)=\max -1$. (for $\lambda$ odd)
	\end{itemize}
	\emph{Then} all edges incident to vertices of $N$ can be labeled correctly.
	\begin{itemize}
		\item[I.] The output edge incident to $w$ has to have a label $1$.
		\item[II.] The output edges incident to $w$ has to have $1$ and some $s\neq 0$ even. 
	\end{itemize}
\end{lemma}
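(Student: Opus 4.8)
The plan is to handle the four cases under one common scheme, keeping two kinds of assertions separate: the \emph{existence} of a correct labeling of all edges incident to $N$, and, in cases~I and~II, the \emph{forced} labels on the output edge(s) of $w$. My first step is to pin down the labels available on the edges meeting $u$ and $v$. In case~I both $u$ and $v$ have maximum degree, so Lemma~\ref{lem:max} applies directly: combined with the hypothesis that every edge $\{u,z\}$ is odd, every edge $\{v,z\}$ is even, and $e_1=e_2=0$, it rules out three of the four admissible label sets and forces the labels at $u$ to be exactly $\{0\}\cup\{3,5,\dots,\lambda\}$ and those at $v$ to be $\El=\{0,2,\dots,\lambda-1\}$; hence the edges from $u$ to $N$ use precisely $\{3,5,\dots,\lambda\}$ and the edges from $v$ to $N$ use precisely $\{2,4,\dots,\lambda-1\}$. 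In cases~II--IV the vertices $u,v$ have degree $\max-1$, so Lemma~\ref{lem:max} is not available; there I would argue directly, using that the parity restriction together with the adjacency of each $\{u,z\}$ (resp.\ $\{v,z\}$) to $e_1$ (resp.\ $e_2$) forbids every label within distance one of $0$, $2$, $3$, $4$ or $5$ as appropriate, after which a counting argument (the number of edges to $N$ equals the number of surviving labels of each parity) shows that the edges to $N$ again exhaust a prescribed set of odd and of even labels.

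The existence half I would then reduce to a bipartite assignment. To each $z\in N$ we must give an ordered pair $(a_z,b_z)$, with $a_z$ the odd label of $\{u,z\}$ and $b_z$ the even label of $\{v,z\}$; the $a_z$ must be pairwise distinct, the $b_z$ pairwise distinct, and the only genuine additional requirement is $|a_z-b_z|\ge 2$, coming from the adjacency of $\{u,z\}$ and $\{v,z\}$ at $z$. Every other distance-two pair among these edges --- for example $\{u,z\}$ against $\{v,z'\}$, or either of them against $e_1,e_2$ --- receives distinct labels automatically by parity, so no further condition survives. Since one of $a_z,b_z$ is odd and the other even, the difference $|a_z-b_z|$ is odd, so $|a_z-b_z|\ge 2$ is equivalent to $a_z$ and $b_z$ not being consecutive integers; I would then exhibit an explicit pairing of the prescribed odd and even sets that avoids consecutive pairs. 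Such a pairing exists exactly once the label sets are large enough, which is why the borderline spans $\lambda=5,6,7$ are excluded: already for $\lambda=5$ the only available odd label $3$ has nothing but its two consecutive evens $2,4$ to be matched with, so no valid pairing exists.

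Finally I would read off the forced labels at $w$. The output edge $g_1$ shares $w$ with $\{u,w\}$ and $\{v,w\}$ and lies at distance two from every other edge at $u$ and at $v$ and from $e_1,e_2$; hence $g_1$ must differ from every odd label except $1$ and $a_w$, and from every even label except $b_w$ (and, in case~II, the single leftover even label $s$ that $v$ does not use because its degree is only $\max-1$). As $g_1$ must in addition differ by at least $2$ from $a_w$ and from $b_w$, the only value surviving in case~I is $g_1=1$, while in case~II the two output edges are forced to be $\{1,s\}$ with $s\neq 0$ even. To make these forced values consistent with the adjacency constraints (for instance $|1-b_w|\ge 2$ and $|1-s|\ge 2$) I would select the matching of the previous step so that $b_w\ge 4$ and the leftover even $s\ge 4$, which is again possible for all admissible $\lambda$.

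\textbf{The main obstacle} I anticipate is this existence step: producing, for every admissible $\lambda$ and in each of the four shifted label-set regimes, an explicit consecutive-free pairing that simultaneously leaves $w$ with exactly the prescribed free labels. The delicate bookkeeping occurs at the extreme (smallest and largest) labels, and it is precisely this bookkeeping that breaks down for the excluded small spans; cases~III and~IV require only the existence half and follow the same pattern, with the odd and even sets merely shifted according to the values of $e_1$ and $e_2$.
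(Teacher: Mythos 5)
Your framework is the right one, and it matches the paper's setup almost exactly: reduce the problem to pairing the admissible odd labels with the admissible even labels so that no pair consists of consecutive integers, then read off the forced labels at $w$. But at the point where the real work happens, the proposal stops. The existence of the consecutive-free pairing --- which you yourself single out as ``the main obstacle'' --- \emph{is} the content of the lemma, and your proposal never establishes it. Saying you ``would exhibit an explicit pairing'' for every admissible $\lambda$ and in each of the four shifted label-set regimes is a statement of intent, not a proof; the endpoint bookkeeping you defer is exactly where such constructions are delicate, and it differs case by case (the degree-deficient labels are $2$ and $\lambda$ in case~I, $2$ and $\lambda-1$ in case~II, $0,5,\lambda$ in case~III, and $0,2,7,\lambda$ in case~IV). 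Moreover, in cases~I and~II the forced-label claim at $w$ needs more than one pairing to exist: you must be able to \emph{choose} a pairing in which the inner edges $\{u,w\},\{v,w\}$ avoid the label $2$ (resp.\ the labels $2$ and $\lambda-1$), so your step ``select the matching so that $b_w\ge 4$'' presupposes the unproven existence statement in a strengthened form.

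The paper closes precisely this gap without any explicit construction. It encodes the incompatibility (consecutiveness) relation between the admissible odd and even labels as a bipartite graph $H_A$ whose vertices are the labels; every vertex of $H_A$ has degree at most $2$, and in each of the four cases one or two edges between the deficient labels listed above can be added to make $H_A$ exactly $2$-regular. The bipartite complement $\bar{H_A}$ is then $(k-3)$-regular, where $k=\deg_G(u)=\deg_G(v)$, and a regular bipartite graph has a perfect matching by Hall's marriage theorem; any such matching assigns to each vertex of $N$ a compatible (odd, even) pair, i.e.\ a correct labeling of $H$. The regularity argument also supplies the freedom needed at $w$: the matching has $k-1\ge 3$ edges, so at least two of them (resp.\ three, in case~II) avoid the problematic labels, and one of these can be placed at $w$, after which $1$ (resp.\ $1$ and $s$) is the only possibility for the output edge(s), as you argue. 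If you replace your ``explicit pairing'' step by this Hall-type argument, your proof goes through; as written, it is incomplete at its central step.
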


We omit the full proof of this technical lemma here but it is proved in the appendix. The idea of the proof is to construct an auxiliary bipartite graph. Each edge of $H$ is labeled by some label from the correct set and it is represented by a vertex. Two vertices are connected whenever they be incident in graph $H$ without breaking condition of a correct labeling. It can be shown that such graph is almost $k$-regular for some $k$. Moreover we can delete some edges from that graph and then it becomes $k$-regular. Then we can found perfect matching using Hall marriage theorem.

The Labeling of the output edge is then easy to show because label $1$ is the only unused label and it cannot be placed anywhere else. The other edge incident to the vertex $w$ has an arbitrary nonzero even label and we have exactly one left.

\paragraph{The main reductions proof idea}
We would like to give a reader the general idea used in proofs of all cases. We will develop some gadgets to model the two parts of the input of \MNpr. Namely the logical variables and the formula itself, which we model clause by clause. Moreover, in general-case reductions we need some middle-pieces to glue them together. 

To prove that the gadget for a variable works correctly we need to check that there is no any other labeling of output edges in the {\em variable gadget} than the one described in the image, or its $\lambda$-inversion.
Note that the only possible labels on an output edge are $0$ (or $1$) and $\lambda$ (or $\lambda-1$)---these will represent the logical value of the variable.
For now on, we omit the $\lambda$-inversion case in the proof.
Every variable gadget contains a part with an output edge such that it is possible to repeat it arbitrarily---we call this part {\em repeatable}.

For a clause, we use a gadget for a given span with exactly $3$ input edges. This \emph{clause gadget} has to admit a labeling whenever at most two input edges represents the same logical value.
On the other hand it does not admit a labeling when all input edges represents the same logical value.

\subsection{Even $\lambda\ge 8$}\label{sec:even}
We divide the \emph{variable gadget} into three parts. The \emph{initial part} and the \emph{ending part} are only technical support for starting and ending process correctly. The main work is done in the \emph{repeatable part}.

\imgw{eng-promena_S21}{0.85}

By Lemma~\ref{lem:even} the label of $e_1$ is $0$. Now we have two possibilities (sets) how to label all the edges incident to the vertex $v$: $\El\cup\{1\}\setminus\{0,2\}$ and $\El\setminus\{s\in\El\}.$
If we label edges incident to $v$ from the set $\El\cup\{1\}\setminus\{0,2\}$ it is impossible to label both edges $e_{w_1}$, $e_{w_2}$ incident to the vertex $w$, because we need to use both $0,\ 2$ labels on them. But the label $0$ is already used for the edge $e_1$ which is at distance two.
While if we label these edges from the set $\El\setminus\{s\in\El\}$, in this case it is possible to label the output edge by $s$ or by $1.$

Later the middle-piece gadget further restricts the output, so that the only possible label is $1.$

To prove that it is correct we use Lemma~\ref{lem:joint} part II.

Edges $e_3$ and $e_4$ need to have labels $0$ or $\lambda$ by Lemma~\ref{lem:even}.
As $e_3$ is in distance two to $e_2$ and $e_2$ is labeled by $0$ implies that $e_3$ cannot have label $0.$

The {\em middle-piece gadget} gives us only two possible outputs: $2$ or $0$. This is because Lemma~\ref{lem:max}.
Moreover, this implies that the only possible labeling of input edges is by the label $1.$

The output of the middle-piece gadget is plugged into the \emph{clause gadget}.
\imgw{eng-klau+mezi_S21}{0.99}\qed


\subsection{Odd $\lambda\ge 9$}\label{sec:odd}
This case is more complicated than the previous one. A reason for this is in the difference between Lemma~\ref{lem:odd} and Lemma~\ref{lem:even}.
In either case there are only two possible labelings, but in Lemma~\ref{lem:odd} the degree of the vertex $u$ equals to the degree of the vertex $v$, while this is not true in Lemma~\ref{lem:even} and so we can distinguish them in the even case shown before.

We start with correctness of the \emph{variable gadget}.
We prove that neighboring edges of vertex $v$ are labeled by labels from $\Ol\setminus\{1\}\cup\{0\}.$ We proceed by contradiction. Suppose that these edges are labeled by $\El$ (according to Lemma~\ref{lem:max} this is the only other option) then edges incident to the vertex $u$ has labels from $\Ol\setminus\{1\}\cup\{0\}$. Then exists the edge $e=\{v,z\}$ that is labeled by some odd $l\neq\lambda$. So the neighborhood of the vertex $z$ can be labeled either by a set $\El\setminus\{0,2,l-1,l+1\}\cup\{1\}$ or by a set $\El\setminus\{0,l-1,l+1\}$. Neither of them is sufficiently large to label all the edges.

\imgw{eng-promena_L21}{0.99}

The correctness of the other labeling is shown in the image.

Lemma~\ref{lem:joint} parts III. and IV. ensures that it is possible to repeat the repeatable part of the gadget.
Note that the repeatable part consists of two identical parts, but it is possible to use only one of them as an output, because these parts are labeled $\lambda$-symmetrically.

\imgw{eng-mezi+nas_L21}{1}

The correctness of the \emph{auxiliary gadget} is described in Lemma~\ref{lem:joint} part I. The purpose of this gadget is to create an edge with label $1$.

The \emph{middle-piece gadget} has two kinds of inputs. Both kinds of inputs correspond to the \emph{variable gadget}, but one of them is connected to the middle-piece through the \emph{auxiliary gadget}. 

The edges incident to the vertex $v$ can by labeled only by labels from the set $\{\El\}$. This is ensured by the variable inputs, because they contains each label from the set $\Ol\setminus\{1\}$ and also by auxiliary inputs containing label $1$.
Note, that we can create as many such inputs as it is needed.
Moreover, the label $1$ forbids labels $0$ and $2$ anywhere besides the output edge.


Each output from the \emph{middle-piece gadget} is plugged into the \emph{clause gadget} in the following way, which completes the proof.

\img{eng-klauzule_L21}{0.75}\qed

\bibliographystyle{splncs}
\bibliography{paper}

\begin{thebibliography}{10}

\bibitem{colSurvey}
Formanowicz, P., Tanaś, K.:
\newblock A survey of graph coloring - its types, methods and applications.
\newblock Foundations of Computing and Decision Sciences \textbf{37}(3) (2012)
  223 -- 238

\bibitem{survey}
Calamoneri, T.:
\newblock The {L}(h, k)--labelling problem: An updated survey and annotated
  bibliography.
\newblock The Computer Journal \textbf{54(8)} (2011)  1344--1371

\bibitem{onlineSurvey}
Calamoneri, T.:
\newblock The {L}(h,k)--labelling problem (online updated survey).
\newblock \url{http://wwwusers.di.uniroma1.it/\~{}calamo/survey.html} (2013)

\bibitem{GriggsLabeling}
Griggs, J.R., Yeh, R.K.:
\newblock Labelling graphs with a condition at distance 2.
\newblock SIAM J. Discrete Math. \textbf{5}(4) (1992)  586--595

\bibitem{YehLabeling}
Yeh, R.K.:
\newblock Labeling Graphs with a Condition of Distance Two.
\newblock PhD thesis, University of South Carolina (1990)

\bibitem{channel}
Hale, W.K.:
\newblock Frequency assignment: theory and applications.
\newblock \textbf{68} (1980)  1497--1514

\bibitem{fiala}
Fiala, J., Kloks, T., Kratochvíl, J.:
\newblock Fixed-parameter complexity of {$\lambda$}-labelings.
\newblock Discrete Applied Mathematics \textbf{113}(1) (2001)  59 -- 72
  Selected Papers: 12th Workshop on Graph-Theoretic Concepts in Computer
  Science.

\bibitem{vizing}
Vizing, V.G.:
\newblock On an estimate of the chromatic class of a {$p$}-graph.
\newblock Diskret. Analiz No. \textbf{3} (1964)  25--30

\bibitem{havet-upper}
Havet, F., Reed, B., Sereni, J.S.:
\newblock Griggs and yeh's conjecture and $l(p,1)$-labellings.
\newblock SIAM Journal on Discrete Mathematics \textbf{26}(1) (2012)  145--168

\bibitem{flum-grohe}
Flum, J., Grohe, M.:
\newblock Parameterized Complexity Theory (Texts in Theoretical Computer
  Science. An EATCS Series).
\newblock Springer-Verlag New York, Inc., Secaucus, NJ, USA (2006)

\bibitem{diestel2006graph}
Diestel, R.:
\newblock Graph Theory.
\newblock Electronic library of mathematics. Springer (2006)

\bibitem{schaefer}
Schaefer, T.J.:
\newblock The complexity of satisfiability problems.
\newblock In: Proceedings of the Tenth Annual ACM Symposium on Theory of
  Computing. STOC '78, New York, NY, USA, ACM (1978)  216--226

\bibitem{Garey}
Garey, M.R., Johnson, D.S.:
\newblock Computers and Intractability: A Guide to the Theory of
  NP-Completeness.
\newblock W. H. Freeman \& Co., New York, NY, USA (1979)

\end{thebibliography}

\appendix
\clearpage
\section*{Appendix}

\subsection*{Proof of Lemma~\ref{lem:gc}}

\begin{proof}
For the proof we use all the facts already proved in the proof of Lemma~\ref{lem:gp}.
The difference between a generalized path and a cycle is that the generalized cycle is closed, and so we have to care about used labeling.

For all lengths of sequences presented so far, the sequence starts with the label $0$ and ends with the label $4.$ Recall that the sequence of length $10$ was $|0314204130|.$ This sequence start and ends with the same label---and this cause the incorrectness of labeling.

First we present a new sequence $|03140240240314|(d = 14)$---this sequence proves, that the only sequence that has to start and end with the same label is the one with length $10.$
Now it is clear that if there is even number of pairs with $d = 10$ then the constructed labeling is correct, which finishes the proof of the first part.

For the second part of the lemma, we have to show that for $d = 13, 14$ or $d\ge 16$, there is also a sequence that starts and ends with the label $0$ and the impossibility of such a labeling for all other $d.$
As usually, we begin with the desired sequences $|4130240240314|(d=13), |41302403140314|(d=14), |413041302403140314|(d=18).$ In all these sequences the subsequence $024$ can be repeated arbitrarily.

For the rest we already know, that all the sequences that starts and ends with $0$ have to start with the subsequence $|0314$ and end with the subsequence $4130|.$ As these subsequences cannot be glued together, we have to glue them through another subsequence, which we call a {\em connector}.
Note that the connector subsequence cannot be of length $1,$ because the starting and ending subsequences starts and ends with the same label. This already forbids all $d\le 10.$

The connector can be the sequence $20$ or $02.$ The resulting sequence is the sequence $0314204130,$ which we are already familiar with. Again it is impossible to prolong the sequence a subsequence of length one, two or five. It is easy to see that the only possibilities are to put
\begin{itemize}
  \item[(i)] $0314$ to the beginning, 
  \item[(ii)] $4130$ to the end,
  \item[(iii)] $420$ right after the connector.
\end{itemize}
This forbids the sequences of length $11,\ 12$ and $15$ and finishes the proof.
\qed
\end{proof}

\subsection*{Proof of Lemma~\ref{lem:joint}}

More detailed proof of the lemma follows the idea that has been shown before.

For readers convenience, we repeat here the figure corresponding to the Lemma~\ref{lem:joint}:

\imgw{eng-sude_liche}{1}

\begin{proof}
We start by a construction of an auxiliary $2$-regular bipartite graph $H_A.$
We use the graph $H_A$ to represent the incompatibility relation between the set $\El$ and the set $\Ol.$
Recall that $k = \deg(u) = \deg(v)$. The {\em left partite} represents $k-1$ odd labels, by which we can label the $H$-neighborhood of the vertex $u$. While the {\em right partite} represents $k-1$ even labels, by which we can label the $H$-neighborhood of the vertex $v$.
Of course by this we do not use the labels of edges $e_1$ and $e_2$.

	Vertices are connected by an edge, whenever corresponding edges in graph $H$ cannot be incident.

	Note that every vertex in graph $H_A$ has degree at most $2,$ as we would like $H_A$ to be $2$-regular, we have to add several edges to $H_A,$ which we do as follows:
	\begin{itemize}
		\item[I.] In this case the left partite represents labels in the set $\Ol\setminus\{1\},$ while the right partite represents labels in the set $\El\setminus\{0\}$. The only vertices with degree one are: $\lambda$ and $2$.
		It is possible to add an edge $\{2,\lambda\}$.
	\item[II.] The left  partite represents labels in the set $\Ol\setminus\{1\}$, while the right partite has represents labels in the set $\El\setminus\{0,\lambda\}$. Only vertices with degree one are: $\lambda-1$ and $2$. Then we can add edge $\{2,\lambda-1\}$.
		\item[III.] The left partite represents labels in the set $\Ol\setminus\{1,3\}$, while the right partite represents labels in the set $\El\setminus\{2,4\}$. 
			Vertices with degree less than two represents the following labels: $0$ (degree zero), $5$ and $\lambda$ (both degree one).
			Then we can add two edges: $\{0,5\}$ and $\{0,\lambda\}.$
		\item[IV.] The left partite represents labels in the set $\Ol\setminus\{3,5\}$, while the right partite represents labels in the set $\El\setminus\{4,6\}$. 
			Vertices with degree one represents the following labels: $0$, $2$, $7$ and $\lambda.$
		Then we can add two edges: $\{2,7\}$ and $\{0,\lambda\}.$
	\end{itemize}

	\imgw{bgs}{0.8}

	 Now we create the graph complement $\bar{H_A}$ of the auxiliary graph $H_A$. Now $\bar{H_A}$ is $(k-3)$-regular bipartite graph and then it has perfect matching by Hall's marriage theorem and so this perfect matching describes a correct labeling of the graph $H$.

	Now it remains to show that in cases I. and II. it is possible to extend the labeling to the output edges incident to the vertex $w.$ For this by {\em inner edges incident to $w$} we mean the edges $\{u,w\},\{v,w\}.$
	\begin{itemize}
	\item[I.] In this case the only label incompatible with the label $1$ on the output edge is the label $2,$ but there at least two edges in the matching do not containing the label $2,$ to set labels to the inner edges incident to the vertex $w.$
	\item[II.] In this case we have to label two outgoing edges incident to the vertex $w.$ Note that without loss of generality, we can use labels $1$ and $\lambda.$ As in the previous case, we have to exclude those labelings that associate label $\lambda-1$ or $2$ with an inner edge incident to $w.$ This is possible as there are at least $3$ edges in the perfect matching.\qed
	\end{itemize}
\end{proof}

\subsection*{\NP-hardness for $\lambda = 5$}

\begin{lemma}
The \Lproblem problem is \NP-hard for $\lambda = 5.$
\end{lemma}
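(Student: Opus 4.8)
The plan is to follow exactly the reduction scheme from \MNpr that the general proofs use, but to rebuild the gadgets by hand because the general-case gadgets rely on a maximum-degree vertex of degree $\frac{\lambda+1}{2}$ whose incident edges carry a full odd (or even) label set; for $\lambda = 5$ this degree is only $3$, and the label sets $\Ol = \{1,3,5\}$ and $\El = \{0,2,4\}$ are so small that the middle-piece and auxiliary gadgets collapse. First I would record, via Observation~\ref{thm:deltaLowerBound}, that $\lambda = 5$ forces $\Delta \le 3$, so every gadget vertex has degree at most $3$; and via Lemma~\ref{lem:max} (odd case) that the three edges at any degree-$3$ vertex are labeled from one of $\Ol$, $\Ol\setminus\{1\}\cup\{0\}$, $\El$, or $\El\setminus\{\lambda-1\}\cup\{\lambda\} = \{0,2,4\}\setminus\{4\}\cup\{5\} = \{0,2,5\}$. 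These four small sets are the entire ``state space'' the gadgets can use to encode a Boolean value, so I would fix the convention that output label $0$ (or $1$) means \texttt{false} and $5$ (or $4$) means \texttt{true}, consistent with the general odd proof.

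The construction I would build has three pieces. First, a \emph{variable gadget}: a path-like chain of degree-$3$ vertices, each hung with a pendant so that Lemma~\ref{lem:odd} applies at every internal edge, forcing the labeling to alternate between the ``$0$-type'' configuration ($e$ labeled $0$, one side all-even, other side all-odd) and its $5$-inversion. The point is that the forced alternation propagates a single bit along the chain and exposes, at every repeat, an output edge whose only admissible labels are the two chosen bit-values; I would verify the repeatability directly since Lemma~\ref{lem:joint} is not available at this degree. Second, a \emph{middle-piece/clause gadget}: three input edges feeding a small bipartite core whose feasible labelings are exactly those in which not all three inputs carry the same bit. Concretely I would arrange a $K_{2,2}$-like core (the largest complete bipartite piece respecting $\Delta\le3$) with the three inputs attached so that a monochromatic input triple over-subscribes the available labels of $\{0,2,4\}$ (respectively $\{1,3,5\}$) at the distance-one and distance-two constraints, making the core unlabelable, whereas any bichromatic triple frees up exactly enough labels. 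Checking this ``at least one true and at least one false'' behaviour is the combinatorial heart of the clause gadget and I would do it by a short exhaustive case analysis over the at most $2^3$ input patterns.

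I would then assemble $G_\varphi^5$ by taking one variable gadget per Boolean variable and one clause gadget per clause, connecting the three occurrence-outputs of each clause's variables into that clause's three inputs (inserting length-calibrated connector paths so that the distance-two constraints across gadget boundaries do not spuriously couple, using the path-labeling facts from Lemma~\ref{lem:gp} to pad safely). The correctness argument is the usual two directions: a \MNpr-satisfying assignment yields a consistent bit on every variable chain, which each clause gadget can extend to a full span-$5$ labeling; conversely any span-$5$ labeling forces, by Lemma~\ref{lem:odd} and Lemma~\ref{lem:max}, a well-defined bit on each variable, and the clause gadget's unlabelability under a monochromatic triple guarantees the induced assignment is not-all-equal on every clause. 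Membership in \NP\ is immediate since a labeling is a polynomial-size certificate checkable in polynomial time. \textbf{The main obstacle} I anticipate is the clause gadget: with only three labels of each parity, the slack is minimal, and I must design the bipartite core and the input attachment so that it is simultaneously (i) labelable for all six bichromatic input patterns and (ii) provably unlabelable for the two monochromatic patterns, all while every vertex stays within degree $3$; getting a single gadget that threads this needle, rather than needing different gadgets for different input patterns, is the delicate part and is precisely why these borderline values were separated out from the general reduction.
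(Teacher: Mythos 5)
Your proposal reproduces the paper's high-level strategy (a reduction from \MNpr with a variable gadget, a clause gadget, and bit-values drawn from the restricted label sets forced by Observation~\ref{thm:deltaLowerBound} and Lemma~\ref{lem:max}), but it stops exactly where the actual proof begins. For $\lambda=5$ the entire mathematical content is the explicit construction of the two gadgets and the verification that they force what you need: the paper exhibits concrete graphs and then runs an exhaustive case analysis (ten labeling branches for the variable gadget, four for the clause gadget) showing that the variable gadget's output edge can only carry a label from $\{0,2\}$ or from $\{3,5\}$, and that the clause gadget is labelable precisely when not all three inputs come from the same set. You explicitly defer this --- you call the clause gadget ``the combinatorial heart'' and ``the delicate part'' and only describe properties a hypothetical $K_{2,2}$-like core should have, without exhibiting it or checking a single labeling. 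Since the whole reason $\lambda=5,6,7$ are split off from the general proof is that the generic gadgets provably fail there, asserting that suitable small gadgets exist is not a proof; it is a restatement of the lemma.

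There is also a concrete step in your plan that would fail as described: inserting ``length-calibrated connector paths'' between variable outputs and clause inputs. With span $5$, a path has far too much labeling freedom --- labels can drift along it --- so a free connector path decouples the clause input from the variable's bit and destroys the forcing your correctness argument relies on. (Lemma~\ref{lem:gp} cannot help here: it characterizes generalized paths admitting span-$4$ labelings, and says nothing about how a path must be labeled inside a span-$5$ labeling of a larger graph.) The paper avoids this issue by plugging the variable gadgets' output edges directly into the clause gadget, so that the distance-one and distance-two constraints of the clause gadget act immediately on the forced output labels. Any repaired version of your construction would have to do the same, or make the connectors themselves out of forced repeatable parts of the variable gadget.
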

\begin{proof}
	A variable of is represented by the following \emph{variable gadget}:
\imgw{eng-promena_5_2_1}{1}

Case analysis are in tables for now on. In the table is shown every possible labeling of the edges highlighted in gadget starting with the edge $e_0$. If the labeling cannot be extended to all edges it is marked by the symbol "---".

\begin{center}	
	\begin{tabular}{|c||c|c|c|c|c|c|c|}\hline
			&\multirow{5}{*}{$e_0$} & \multirow{2}{*}{$e_1'$}& \multirow{2}{*}{$e_2'$} & $e_3''$ & & & \\
											& &&& $e_3'''$ & & & \\
						  &   &	 \multirow{3}{*}{$e_1$}& \multirow{3}{*}{$e_2$}& $e_3'$& & & \\
				 &		&&	& \multirow{2}{*}{$e_3$}& $e_4'$ & & \\
		   & &&&		& $e_4$ & $e_5$ & $e_6$\\\hline
				\hline
			I.	&\multirow{5}{*}{$0$} & \multirow{2}{*}{$3$}& \multirow{2}{*}{$1$} & $4, 5$ only & & & \\
			   &										 &&& $4, 5$ only  & & & \\
				  &	   &	 \multirow{3}{*}{$5$}& \multirow{3}{*}{$2$}& $4$& & & \\
				  &	&&	& \multirow{2}{*}{$0$}& $3$ & & \\
			&&&	&	& $5$ & more options & ---\\\hline
				II.&\multirow{5}{*}{$0$} & \multirow{2}{*}{$5$}& \multirow{2}{*}{$2$} & $4$ & & & \\
				  &									 &&& $0$ & & & \\
				  &	   &	 \multirow{3}{*}{$3$}& \multirow{3}{*}{$1$}& $4,5$ only & & & \\
				  &	&&	& \multirow{2}{*}{$4, 5$ only}& ---  & & \\
			&&&	&	& --- & --- & ---\\\hline
				III.&	\multirow{5}{*}{$0$} & \multirow{2}{*}{$2$}& \multirow{2}{*}{$5$} & $3$ & & & \\
				  &									 &&& $0$ or $1$. & & & \\
				  &	   &	 \multirow{3}{*}{$4$}& \multirow{3}{*}{$1$}& $5$ & & & \\
				  &	&&	& \multirow{2}{*}{$3$}& only $0$.  & & \\
			&&&	&	& only $0$. & --- & ---\\\hline
				IV.&	\multirow{5}{*}{$0$} & \multirow{2}{*}{$2$}& \multirow{2}{*}{$5$} & $3$ & & & \\
				  &									 &&& $0$ or $1$. & & & \\
				  &	   &	 \multirow{3}{*}{$4$}& \multirow{3}{*}{$1$}& $3$ & & & \\
				  &	&&	& \multirow{2}{*}{$5$}& $0$  & & \\
			&&&	&	& $2$ & $4$ & $0$ impossible\\\hline
	\end{tabular}
\end{center}

	\begin{center}	
		\begin{tabular}{|c||c|c|c|c|c|c|c|}\hline
			&	\multirow{5}{*}{$e_0$} & \multirow{2}{*}{$e_1'$}& \multirow{2}{*}{$e_2'$} & $e_3''$ & & & \\
				 &									 &&& $e_3'''$ & & & \\
				 &	   &	 \multirow{3}{*}{$e_1$}& \multirow{3}{*}{$e_2$}& $e_3'$& & & \\
				 &	&&	& \multirow{2}{*}{$e_3$}& $e_4'$ & & \\
		   &&&	&	& $e_4$ & $e_5$ & $e_6$\\\hline
				\hline
				V.&	\multirow{5}{*}{$0$} & \multirow{2}{*}{$2$}& \multirow{2}{*}{$5$} & $3$ & & & \\
				  &									 &&& $0$ or $1$. & & & \\
				  &	   &	 \multirow{3}{*}{$4$}& \multirow{3}{*}{$1$}& $3$ & & & \\
				  &	&&	& \multirow{2}{*}{$5$}& $2$  & & \\
			&&&	&	& $0$ & $4$ & $0$ impossible \\\hline
				VI.&	\multirow{5}{*}{$0$} & \multirow{2}{*}{$2$}& \multirow{2}{*}{$5$} & $3$ & & & \\
				  &									 &&& $0$ or $1$. & & & \\
				  &	   &	 \multirow{3}{*}{$4$}& \multirow{3}{*}{$1$}& $3$ & & & \\
				  &	&&	& \multirow{2}{*}{$5$}& $2$  & & \\
			&&&	&	& $0$ & $3$ & $5$ \\\hline
			VII.&	\multirow{5}{*}{$0$} & \multirow{2}{*}{$4$}& \multirow{2}{*}{ $1$} & $3$ or $5$. & & & \\
				  &									 &&& $5$ or $3$. & & & \\
				  &	   &	 \multirow{3}{*}{$2$}& \multirow{3}{*}{$5$}& $0$ or $1$ & & & \\
				  &	&&	& \multirow{2}{*}{$3$}& only $1$ or $0$.  & & \\
			&&&	&	& only $1$ or $0$. & --- & ---\\\hline
				VIII.&	\multirow{5}{*}{$0$} & \multirow{2}{*}{$4$}& \multirow{2}{*}{ $1$} & $3$ or $5$. & & & \\
				  &									 &&& $5$ or $3$. & & & \\
				  &	   &	 \multirow{3}{*}{$2$}& \multirow{3}{*}{$5$}& $3$  & & & \\
				  &	&&	& \multirow{2}{*}{$1$}& only 4  & & \\
			&&&	&	& only 4. & --- & ---\\\hline
				IX.&	\multirow{5}{*}{$0$} & \multirow{2}{*}{$4$}& \multirow{2}{*}{ $1$} & $3$ or $5$. & & & \\
				  &									 &&& $5$ or $3$. & & & \\
				  &	   &	 \multirow{3}{*}{$2$}& \multirow{3}{*}{$5$}& $3$  & & & \\
				  &	&&	& \multirow{2}{*}{$0$}& $4$  & & \\
			&&&	&	& $2$ & $5$  & impossible\\\hline
				X.&	\multirow{5}{*}{$0$} & \multirow{2}{*}{$4$}& \multirow{2}{*}{ $1$} & $3$ or $5$. & & & \\
				   &									 &&& $5$ or $3$. & & & \\
					  &   &	 \multirow{3}{*}{$2$}& \multirow{3}{*}{$5$}& $3$  & & & \\
					  &&&	& \multirow{2}{*}{$0$}& $2$  & & \\
			 &&&&		& $4$ & $1$ & $5$\\\hline
			\end{tabular}
	\end{center}

	The inputs of the variable gadgets are plugged into the following \emph{clause gadget}:

\imgw{eng-klauzule_5_2_1}{0.9}
	\begin{center}\label{tabklau521}
		\begin{tabular}{|c||c|c|c|c|c|c|}\hline
			&\multirow{4}{*}{$e_0$} & $e_1'$& & & & \\
			&	&\multirow{3}{*}{$e_1$}& \multirow{3}{*}{$e_2$}& \multirow{2}{*}{$e_3'$}&$e_4'$& \\
			&	&&&&$e_4''$& \\
			&	&&&$e_3$&$e_4$&$e_5$\\\hline 
			\hline
			I.&	\multirow{4}{*}{$0$} & $5$& & & & \\
			  &														 &\multirow{3}{*}{$3$}& \multirow{3}{*}{$1$}& \multirow{2}{*}{$3, 5$ impossible}&---& \\
			  &						&&&&---& \\
			  &	 &&& $3,5$ impossible&---&---\\\hline
			II.&	\multirow{4}{*}{$0$} & $3$& & & & \\
				  &													 &\multirow{3}{*}{$5$}& \multirow{3}{*}{$1$}& \multirow{2}{*}{$3, 5$ impossible}&---& \\
				  &					&&&&---& \\
				  & &&&$3, 5$ impossible&---&---\\\hline
			III.&	\multirow{4}{*}{$0$} & $3$& & & & \\
				  &													 &\multirow{3}{*}{$5$}& \multirow{3}{*}{$2$}& \multirow{2}{*}{$4$}&$0,2$ impossible& \\
				  &					&&&& $0,2$ impossible & \\
				  & &&&$0$&---&---\\\hline
			IV.&	\multirow{4}{*}{$0$} & $3$& & & & \\
				   &													 &\multirow{3}{*}{$5$}& \multirow{3}{*}{$2$}& \multirow{2}{*}{$0$}&$3$ or $5$& \\
					  &				&&&&$5$ or $3$& \\
				   & &&&$4$&$1$&$3$ or $5$\\\hline
		\end{tabular}
	\end{center}


As the only input to the clause gadget can be either from a set $\{3,5\}$ or from a set $\{0,2\}$, which represent the truth assignment of the appropriate variable. From the labels in the gadget, we can see (up to $\lambda$-symmetry) that it is impossible to label the clause if there are three inputs from the set $\{3,5\}$ and it is possible to label the clause if there is at least one input from the other set as it is shown in the image above. \qed

\end{proof}

\subsection*{\NP-hardness for $\lambda = 6$}
\begin{lemma}
The \Lproblem problem is \NP-hard for $\lambda = 6.$
\end{lemma}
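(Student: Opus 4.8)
The final statement to prove is that \Lproblem is \NP-hard for $\lambda = 6$. This is one of the borderline even cases excluded from the general even-span construction of Subsection~\ref{sec:even}, so the plan is to mirror the architecture of that general proof while hand-crafting gadgets that actually function at this small value of the span. By Observation~\ref{thm:deltaLowerBound}, $\lcdj{G}\le 6$ forces maximum degree at most $4$, and by Lemma~\ref{lem:max} (even case) every edge incident to a degree-$4$ vertex must carry a label from $\El=\{0,2,4,6\}$. These two facts are the backbone: they pin down the local structure tightly enough that a finite case analysis becomes feasible.

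The overall strategy is a polynomial-time reduction from \MSAT. First I would construct a \emph{variable gadget} whose output edge can be validly labeled in exactly two ways (up to $6$-inversion), one label representing \emph{true} and the other \emph{false}; concretely, by analogy with the even construction these should be the ``low'' value and the ``high'' value near $0$ and $6$. I would exhibit the gadget as a graph built around degree-$4$ vertices, invoke Lemma~\ref{lem:even} to fix the spanning edge between the two high-degree vertices to $0$ or $6$, and then propagate the forced consequences. I would include a \emph{repeatable part} so the same truth value can be fed to all clauses containing that variable, and establish its correctness using Lemma~\ref{lem:joint} (the part applicable to even span). Second I would build a \emph{clause gadget} with three input edges that admits a valid span-$6$ labeling precisely when the three inputs are \emph{not} all equal, thereby encoding the monotone not-all-equal condition. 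As in the appendix proof for $\lambda=5$, I would verify both gadgets by an exhaustive table listing every labeling of the highlighted edges starting from a fixed anchor edge $e_0$, marking with ``---'' any partial labeling that cannot be extended, and checking that exactly the intended outputs survive.

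The correctness argument then has the standard two directions. Given a satisfying not-all-equal assignment, I set each variable gadget to the labeling matching its truth value, propagate through the repeatable parts, and use the clause-gadget tables to confirm that every clause (which by hypothesis is not monochromatic in its inputs) can be completed to a full valid span-$6$ labeling; the distance-two constraints across gadget boundaries are handled by the middle-piece/interface construction as in Subsection~\ref{sec:even}. Conversely, given any labeling of $G_\varphi^6$ with span $6$, the rigidity supplied by Lemma~\ref{lem:max} and Lemma~\ref{lem:even} forces each variable gadget's output into one of the two canonical states, reading off a Boolean assignment; the fact that no clause gadget admits an all-equal labeling guarantees this assignment satisfies the not-all-equal requirement. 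Since $G_\varphi^6$ has size polynomial in $\varphi$ and is constructible in polynomial time, \NP-hardness follows.

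The main obstacle, as the surrounding text warns, is precisely that ``the gadgets developed to carry the labeling do not work for small cases.'' With $\lambda=6$ the palette $\{0,\dots,6\}$ is so narrow—only four even labels available around a degree-$4$ vertex—that the slack exploited by the general even construction largely vanishes, so I expect the hard part to be designing a clause gadget that is simultaneously \emph{permissive} (admits a labeling for all non-monochromatic input triples) and \emph{rejecting} (forbids every all-equal triple) within this cramped label space. I anticipate this requires carefully chosen interface labels $2$ and $4$ (rather than the $0/6$ extremes) on the internal feed edges, together with an auxiliary middle-piece that normalizes variable outputs to a single usable label before they reach the clause, and the verification of tightness will rest almost entirely on the exhaustive case tables rather than on any slick structural argument.
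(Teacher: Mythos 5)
Your proposal lays out the correct high-level architecture --- a polynomial reduction from \MSAT with a variable gadget whose output is rigid up to $\lambda$-inversion, a clause gadget that accepts exactly the non-monochromatic input triples, and verification by exhaustive case tables anchored at a fixed edge $e_0$ --- and this is indeed the same skeleton the paper uses. But the proposal never actually constructs any gadget: every load-bearing step is phrased as ``I would construct,'' ``I would build,'' ``I expect the hard part to be designing.'' For a hardness proof of this kind, the gadgets \emph{are} the proof; the surrounding reduction logic is routine. You correctly identify that the difficulty is producing a clause gadget that is simultaneously permissive and rejecting inside the cramped palette $\{0,\dots,6\}$, and you correctly identify that this is an open obstacle in your write-up --- but you leave it open. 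What you have is a proof plan whose feasibility is exactly the question at issue, not a proof.

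There is also a concrete reason to doubt the specific plan you sketch. You propose to mirror Subsection~\ref{sec:even}, fixing the variable gadget via Lemma~\ref{lem:even} and establishing the repeatable part via Lemma~\ref{lem:joint} part II. But the paper excludes $\lambda=5,6,7$ from the general construction precisely because that machinery fails at these spans: for $\lambda=6$ the bipartite piece $K_{2,\max-1}$ is only $K_{2,3}$ and the label slack that the Hall-type argument in Lemma~\ref{lem:joint} exploits is gone. The paper's actual $\lambda=6$ proof abandons that template entirely: its variable gadget is built around a cycle (the case table rules out labelings by arguments of the form ``$4,6$ impossible in the cycle'' and ``$0,2$ impossible in the cycle''), and its clause gadget is verified by a short table in which the rejecting cases fail because an internal vertex ``does not have two neighbors with odd label.'' So beyond the missing constructions, the route you committed to is the one the authors found does not work at $\lambda=6$; repairing it would require inventing gadgets of a genuinely different shape, which is the substance the proposal omits.
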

A variable is represented by the following \emph{variable gadget}:
\imgw{eng-promena_6_2_1}{1}

And the case analysis is in the following table.

	\begin{center}
		\begin{tabular}{|c||c|c|c|c|}\hline
			&	$e_0'$&$e_1'$&$e_2'$ & \\
		 &	$e_0$&$e_1$&$e_2$ &$e_3$ \\
			\hline\hline
		I.	&	$2$&$0$&$3$ & \\
		   &	$4$&$6$&$1$ &$4,6$ impossible in the cycle\\\hline
			II.&	$2$&$6$&$3$ & \\
			   &	$4$&$0$&$5$ & $0,2$ impossible in the cycle\\\hline
		III.&	$0$&$2$&impossible& \\
			&	$6$&$4$&---&--- \\\hline
			IV.&$0$&$2$&impossible& \\
			   &$4$&$6$&---&--- \\\hline
			V.&	$0$&$6$&$1$ & \\
		   &	$2$&$4$&impossible&--- \\\hline
			VI.&	$0$&$4$&$1$ & \\
			&	$2$&$6$&$3$ &$0$ \\\hline
		\end{tabular}
	\end{center}

	And the clause is represented by the \emph{clause gadget}:

\imgw{eng-klauzule_6_2_1}{0.8}

	\begin{center}\label{tabklau621}
		\begin{tabular}{|c||c|c|c|c|}\hline
			&	\multirow{4}{*}{$e_0$}&$e_1''$& & \\
			&												 &$e_1'$& & \\
			&				&\multirow{2}{*}{$e_1$}&$e_2'$& \\
		 & &&$e_2$&$e_3$\\\hline 
			\hline
			I.&	\multirow{4}{*}{$0$}&$4$ or $6$& & \\
			  &											&$6$ or $4$& & \\
			  &			 &\multirow{2}{*}{$2$}&does not have two neighbors with odd label& \\
			  & &&does not have two neighbors with odd label&---\\\hline 
			II.&	\multirow{4}{*}{$0$}&$2$ or $6$& & \\
				  &										&$6$ or $2$& & \\
				  &		 &\multirow{2}{*}{$4$}&does not have two neighbors with odd label& \\
			   & &&does not have two neighbors with odd label&---\\\hline 
			III.&	\multirow{4}{*}{$0$}&$4$ or $2$& & \\
				   &										&$2$ or $4$& & \\
					  &	 &\multirow{2}{*}{$6$}&$1$ or $3$& \\
				& &&$3$ or $1$&($0$, $5$) or ($4$, $5$)\\\hline 
		\end{tabular}
	\end{center}\qed

\subsection*{\NP-hardness variable gadget for $\lambda = 7$}

\begin{lemma}
The \Lproblem problem is \NP-hard for $\lambda = 7.$
\end{lemma}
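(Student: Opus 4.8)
The final statement to prove is that \Lproblem is \NP-hard for $\lambda = 7$. This is one of the borderline odd cases excluded from the general odd-$\lambda\ge 9$ reduction, so the plan is to mirror the $\lambda=5$ and $\lambda=6$ proofs: build an explicit \emph{variable gadget} and an explicit \emph{clause gadget} tailored to span $7$, then reduce \MNpr to it. The overall reduction framework is already fixed by the paragraph ``The main reductions proof idea''; what remains is to instantiate the gadgets concretely and verify their correctness by the same exhaustive case analysis used in the $\lambda=5,6$ lemmas.

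First I would construct the variable gadget. Since $\lambda=7$ is odd, the relevant structural constraint comes from Lemma~\ref{lem:odd}: around two adjacent maximum-degree vertices (here $\deg=\frac{\lambda+1}{2}=4$) the incident edges split into an ``even'' set and an ``odd'' set, with the connecting edge forced to $0$ or $\lambda=7$. I would exploit this to force each output edge of the gadget into exactly one of two label-classes, $\{0,1\}$ versus $\{6,7\}$ (up to $\lambda$-inversion), which encode the two truth values. As in the general odd case, I would include a \emph{repeatable part} so the variable's value can be broadcast to arbitrarily many clauses, and I expect to invoke Lemma~\ref{lem:joint} (the relevant parts, with $\max$ now equal to $4$) to certify that the repeatable part and any middle-piece can actually be labeled while leaving only label $1$ (or its inversion) available on the output. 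The correctness claim — that \emph{no} labeling of the output edges other than the two intended ones exists — I would discharge by a table enumerating every extension of a fixed starting label $e_0=0$ through the gadget edges, marking dead ends with ``---'', exactly in the style of the $\lambda=5$ tables.

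Next I would build the clause gadget with three input edges, designed so that it admits a valid span-$7$ labeling precisely when the three inputs are \emph{not} all the same truth class, and fails when all three agree. Again a case-analysis table would list the possible input combinations (up to $\lambda$-symmetry it suffices to check the all-equal case against a mixed case) and show that the all-true (equivalently all-false, by inversion) configuration leaves some edge unlabelable, whereas any mixed configuration extends to a full labeling. Putting the pieces together gives $\lcdj{G_\varphi^7}\le 7$ iff $\varphi$ is a yes-instance of \MNpr, which, with membership in \NP{} from Theorem~\ref{thm:np}, yields \NP-hardness.

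The main obstacle is the same one that forces $\lambda=7$ out of the general proof: the gadgets from the $\lambda\ge 9$ construction rely on having enough distinct labels in $\El$ and $\Ol$ to realize the Hall-matching argument of Lemma~\ref{lem:joint} with room to spare, and at $\lambda=7$ these sets are so small that the counting is tight. Concretely, with maximum degree $4$ there are only three usable odd labels and three usable even labels after the forced edge, so the ``at least two (or three) free edges in the matching'' slack used in Lemma~\ref{lem:joint} parts~I--IV is barely met; I expect the delicate step to be designing the gadget geometry (which vertices are adjacent, which edges are inputs/outputs) so that these tight label budgets still permit the intended labelings while rigidly forbidding the unintended ones. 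This is precisely why the verification is pushed to an explicit finite case analysis rather than a general argument, and I would present it as such, with the bulk of the work living in the enumeration tables for the variable and clause gadgets.
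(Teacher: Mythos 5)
Your proposal is a plan for a proof, not a proof: at no point do you actually exhibit a variable gadget or a clause gadget for $\lambda = 7$, nor any of the case-analysis tables you defer to. For a hardness result of this kind the entire mathematical content is the explicit gadget together with its verification --- there is no general principle guaranteeing that such gadgets exist for span $7$; indeed, the very reason $\lambda = 5,6,7$ are excluded from the general construction is that the generic gadgets break there, so the existence of working replacements is exactly what must be demonstrated. Saying that you ``would'' build a gadget forcing outputs into $\{0,1\}$ versus $\{6,7\}$ and that you ``expect'' Lemma~\ref{lem:joint} to certify the repeatable part leaves the statement unproven; the tight counting you yourself identify as the main obstacle is precisely the step that is missing.

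On the approach itself, you also plan more work than is needed, and in the wrong place. The paper's proof constructs only a new \emph{variable} gadget for $\lambda = 7$ and reuses all the other gadgets (clause, middle-piece, auxiliary) from the general odd case $\lambda \ge 9$; correctness of the repeatable part is then inherited verbatim from the general argument, and the only new verification is that the connection between repeatable parts admits a unique labeling. Your plan of re-deriving a bespoke clause gadget and checking it with $\lambda=5$-style enumeration tables is not wrong in principle, but it discards the reusability that makes the $\lambda = 7$ case short, and since you never instantiate either gadget, neither your route nor the paper's is actually carried out in your text.
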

For the case $\lambda = 7,$ we show only the variable gadget because in this case it is possible to reuse all the other gadgets from the general case where $\lambda\geq 9$.

\imgw{eng-promena_721}{1}

The correctness of the repeatable part is done by the same argument as it is done in the proof of the general case for $\lambda\geq 9$. Then it is easy to show that the only possible labeling of connection of repeatable parts is the one shown in the image above.\qed

\end{document}